\newtheorem{theorem}{Theorem}
\newtheorem{lemma}{Lemma}
\begin{document}

\title{Joint Task Partitioning and Parallel Scheduling in Device-Assisted Mobile Edge Networks}

\author{Yang Li$^{\orcidlink{0009-0007-1601-633X}}$, Xinlei Ge$^{\orcidlink{0009-0007-2546-373X}}$, Bo Lei$^{\orcidlink{0000-0002-6301-048X}}$, Xing Zhang$^{\orcidlink{0000-0003-4345-6166}}$,~\IEEEmembership{Senior Member,~IEEE,} and Wenbo Wang$^{\orcidlink{0000-0002-0911-3189}}$,~\IEEEmembership{Senior Member,~IEEE}
\thanks{Manuscript received 17 July 2023; revised 11 September 2023; accepted 1 December 2023. This work is supported by the National Science Foundation of China (NSFC) under Grant 62271062, 62071063. (Corresponding author: Xing Zhang.)}
\thanks{Y. Li, X. Ge, X. Zhang, and W. Wang are with the School of Information and Communications Engineering, Beijing University of Posts and Telecommunications, Beijing 100876, China (e-mail: ly209991@bupt.edu.cn; 2019gexinlei@bupt.edu.cn; zhangx@ieee.org; wbwang@bupt.edu.cn).}
\thanks{B. Lei is with Beijing Branch of China Telecom Co., Ltd., Beijing 100032, China (e-mail: leibo@chinatelecom.cn).}
\thanks{Copyright (c) 2023 IEEE. Personal use of this material is permitted. However, permission to use this material for any other purposes must be obtained from the IEEE by sending a request to pubs-permissions@ieee.org.}
}

\markboth{IEEE INTERNET OF THINGS JOURNAL}%
\IEEEaftertitletext{\vspace{-0.5\baselineskip}}

\maketitle

\begin{abstract}
With the development of the Internet of Things (IoT), certain IoT devices have the capability to not only accomplish their own tasks but also simultaneously assist other resource-constrained devices. Therefore, this paper considers a device-assisted mobile edge computing system that leverages auxiliary IoT devices to alleviate the computational burden on the edge computing server and enhance the overall system performance. In this study, computationally intensive tasks are decomposed into multiple partitions, and each task partition can be processed in parallel on an IoT device or the edge server. The objective of this research is to develop an efficient online algorithm that addresses the joint optimization of task partitioning and parallel scheduling under time-varying system states, posing challenges to conventional numerical optimization methods. To address these challenges, a framework called online task partitioning action and parallel scheduling policy generation (OTPPS) is proposed, which is based on deep reinforcement learning (DRL). Specifically, the framework leverages a deep neural network (DNN) to learn the optimal partitioning action for each task by mapping input states. Furthermore, it is demonstrated that the remaining parallel scheduling problem exhibits NP-hard complexity when considering a specific task partitioning action. To address this subproblem, a fair and delay-minimized task scheduling (FDMTS) algorithm is designed. Extensive evaluation results demonstrate that OTPPS achieves near-optimal average delay performance and consistently high fairness levels in various environmental states compared to other baseline schemes.
\end{abstract}

\begin{IEEEkeywords}
Device-assisted mobile edge networks, task partitioning, parallel scheduling, deep reinforcement learning (DRL), fairness.
\end{IEEEkeywords}

\section{Introduction}\label{I}

\subsection{Research Background and Motivation}\label{I-A}
\IEEEPARstart{T}{he} Internet of Things (IoT) is evolved by the 5G-enabled tactile internet \cite{1}. IoT devices often have limited computational capabilities and finite battery lives due to their small physical size and strict production cost limitations. At the same time, the recent emergence of innovative applications such as virtual reality (VR) and autonomous driving urgently require low delay and large amounts of computational resources. To ensure the quality of experience (QoE) for users, mobile edge computing (MEC) has emerged as a promising paradigm, widely acknowledged as a key solution to enhance the computation performance of IoT devices. By deploying MEC servers at the edge of radio access networks, such as cellular base stations, IoT devices can offload computationally intensive tasks to the nearby edge server (ES), thereby mitigating the energy and time costs of computation.

Extensive research has been conducted on computation offloading to enhance the computational performance of MEC networks. The research scenarios can be categorized into two types: 1) single-user scenarios \cite{2,3,4,5} and 2) multi-user scenarios \cite{6,7,8,9}. Multi-user scenarios are typically more intricate and require consideration of user cooperation or competition. The primary optimization objectives include energy consumption \cite{1,3,5}, delay \cite{10,11}, and the tradeoff between these aspects \cite{12}. Additionally, task offloading can be classified into two categories: 1) partial offloading \cite{4,5,7,8,13,14} and 2) binary offloading \cite{1,2,3,9,15,16,17,18}. Partial offloading involves users offloading a portion of their computational tasks to the nearby ES due to task separability. In the binary offloading scheme, users can transfer their entire computational tasks to the nearby ES due to task indivisibility.

Nonetheless, the aforementioned two task offloading types fail to comprehensively harness the benefits of partitionable tasks. Dividing tasks into multiple segments enables a finer scheduling approach and reduces processing delays. As shown in Fig. \ref{task}(a), a task can be decomposed into multiple task parts, which can be executed in a parallel and load-balanced way. However, if necessary, this partitioning pattern might replicate certain dependent subtasks across multiple partitions \cite{19}, potentially resulting in the repetition of processing overlapping subtasks during task execution. Consequently, striking a balance between the performance benefits derived from task deconstruction and the supplementary overhead incurred by the recurrent processing of overlapping subtasks becomes essential to optimize overall efficiency. Increasingly complex IoT applications comprise a series of subtasks, with the majority adhering to this task partitioning pattern. For instance, in VR games, inputs encompass multimodal data, yielding various outputs like visual, auditory, and haptic sensations. Another case involves monitoring vehicles in intelligent traffic systems. This task entails analyzing photos of specific vehicles regularly. The results span numerous dimensions, encompassing data like license plate particulars, driver status updates, and vehicle speed statistics. These tasks can be partitioned into parallel execution components based on the outputs. In this paper, we specialize in this task partitioning pattern.

Although MEC has advantages in task offloading, there is still a high pressure when facing massive IoT devices. As IoT technology evolves, the number of IoT devices has exploded. According to a recent report from Cisco \cite{20}, the number of devices connected to IP networks will be more than three times the global population by 2023. At the same time, MEC servers have limited resources, which means that competition for MEC server resources from users will become more intense in the future. Considering this situation, we can exploit the diversity among IoT devices, as most mobile users use less than one-third of their CPU capabilities \cite{21}. In this context, IoT devices with idle resources can be included as resource providers to provide computing resources to resource-constrained users. This is the so-called device-assisted mobile edge network (MEN), which is also the current research trend \cite{1,10,13,14,15,16,17,18,22,23,24,25,26,27,28}. The task partitioning pattern studied in this paper is well-suited for device-assisted MENs. Partitioning a task into multiple task parts for parallel processing on multiple IoT devices with idle computational resources can ease the computational pressure on ESs while improving the computational resource utilization of IoT devices and the QoE for users.

\begin{figure*}[!t]
\centering
\subfloat[]{\includegraphics[width=0.5\textwidth]{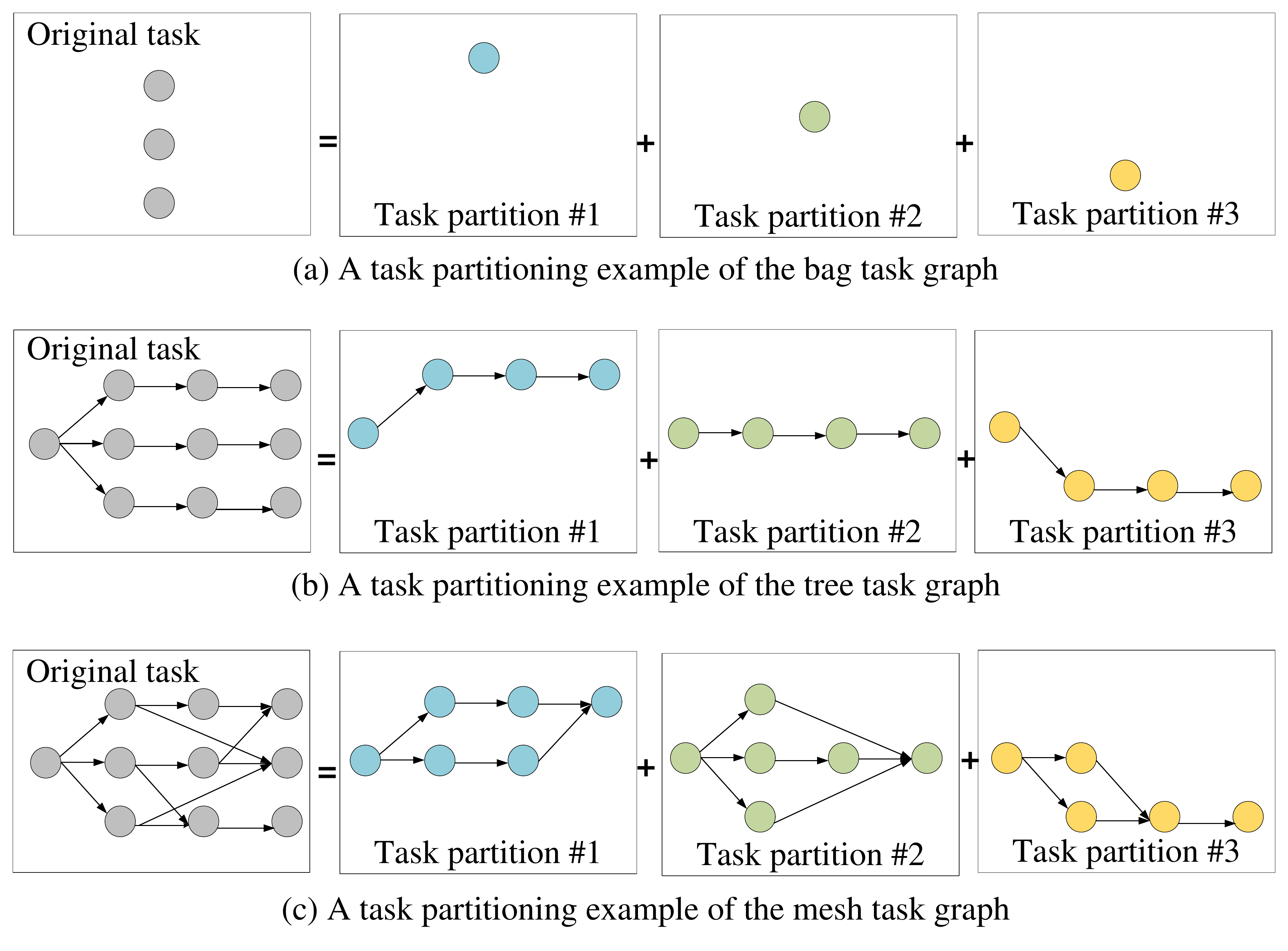}%
\label{task1}}
\subfloat[]{\includegraphics[width=0.5\textwidth]{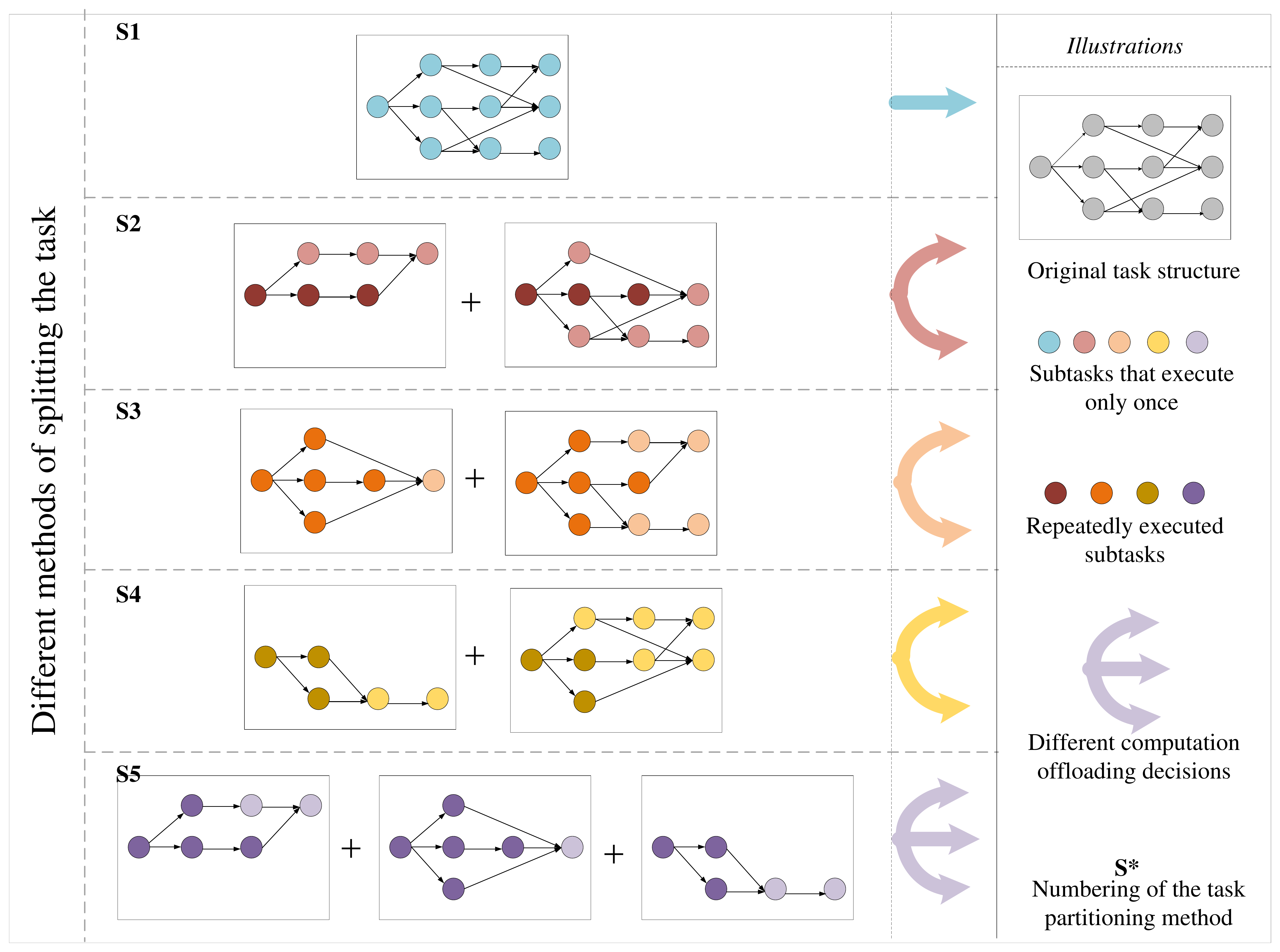}%
\label{task2}}
\caption{The task model studied in this paper. (a) An illustration of three typical task types and corresponding task partitioning examples in the system model. (b) An illustration of five ways to split a mesh type task.}
\vspace{-0.5cm}
\label{task}
\end{figure*}

As shown in Fig. \ref{problem}, employing the task partitioning model explored in this study within the device-assisted MEN system enhances user experiences, including reduced delay, and optimizes resource utilization of end-side devices. Nevertheless, novel technical challenges emerge. Firstly, potential unfairness among users may arise due to the varying availability of idle resources in auxiliary IoT devices. This implies that the selection of task-scheduling strategies must encompass additional considerations. For instance, when selecting computing nodes for task partitions of users, minimizing overall task processing delay and upholding fairness among users is imperative. This precludes a sole emphasis on maximizing system-wide utility at the detriment of individual user device performance. On the other hand, the actions of task partitioning have an impact on subsequent parallel scheduling strategies. As shown in Fig. \ref{task}(b), multiple task partitioning approaches exist. Should way S1 be chosen, a lone computing node suffices for the task; in contrast, opting for ways S2-S4 requires two compute nodes, and selecting way S5 mandates three compute nodes. Given that the overall completion time of the task is determined by the last partition to finish, variations arise in the selection of parallel scheduling strategies for each task partitioning approach within S2-S4. In multi-user device-assisted MEN, joint optimization of task partitioning and parallel scheduling strategies while maintaining fairness among users is a challenging but rewarding problem.

\begin{figure*}[t]
\centerline{\includegraphics[width=0.8\textwidth]{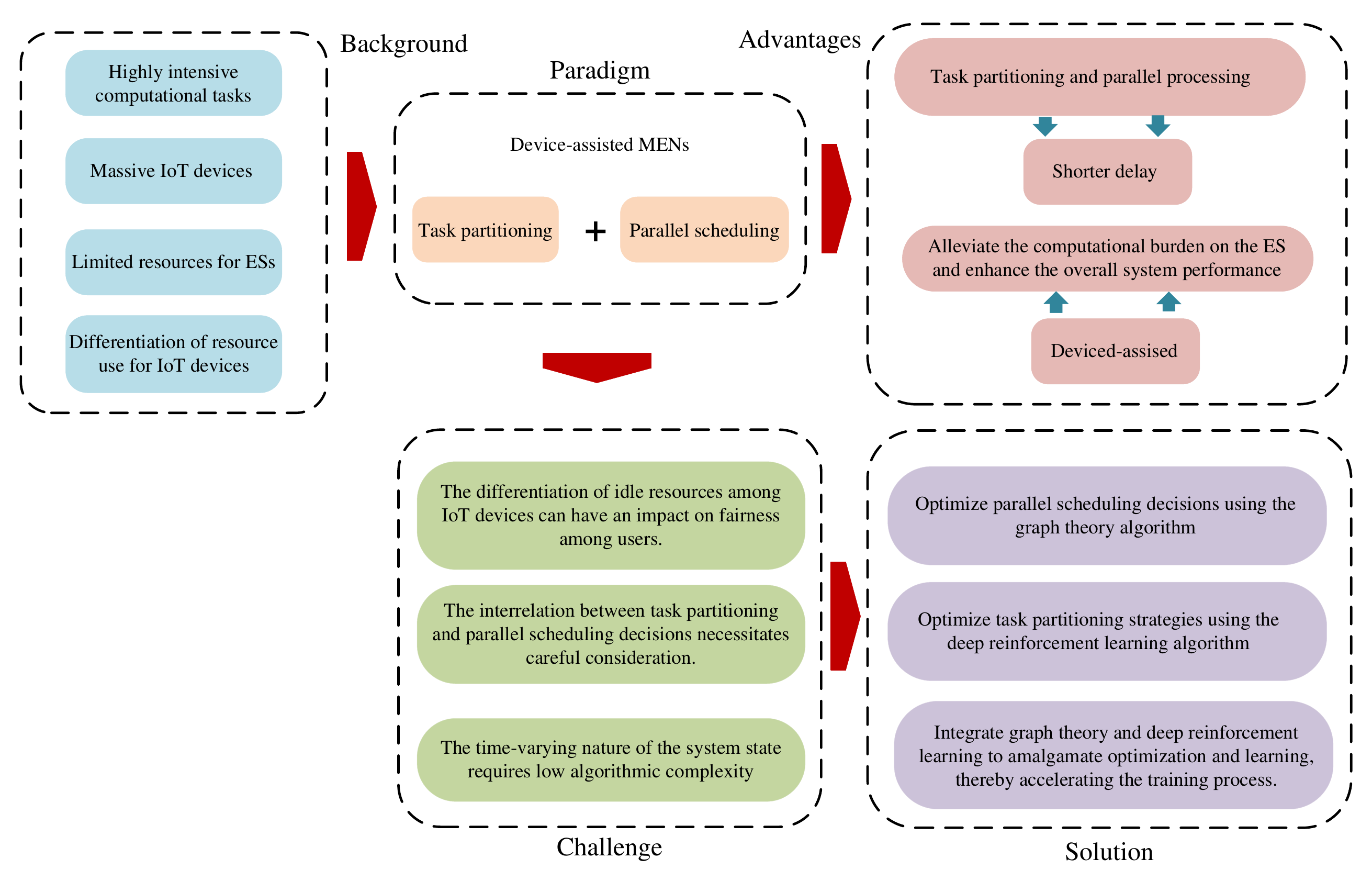}}
\caption{Roadmap of the challenge, problem, and solution for our study.}
\vspace{-0.5cm}
\label{problem}
\end{figure*}

\subsection{Related Work}\label{I-B}
\textit{1) Optimization Problems in MEC:} Currently, significant research efforts are centered on this focal area. Task offloading can be categorized into two forms: 1) partial offloading \cite{4,5,7,8,13,14} and 2) binary offloading \cite{1,2,3,9,15,16,17,18}. For instance, aiming to optimize delay and energy consumption, Wang \textit{et al.} \cite{4} derived the optimal data segmentation strategy within partial offloading mode. The authors in \cite{29} considered optimizing video-based AI inference tasks in a multi-user MEC system. They determined whether the DNN inference task should be executed entirely on the device or MEC server and optimized resource allocation to achieve diminished inference delays, lowered energy consumption, and enhanced recognition accuracy. Nevertheless, the above-mentioned two task offloading approaches inadequately leverage the advantages of partitionable tasks. Partitioning tasks into multiple parts enables a more precise scheduling methodology and mitigates processing delays. For example, in \cite{12}, the authors proposed a framework that splits a task into subchunks and offloads them to a common ES using multiple radio access technologies. Similarly, \cite{30} investigated the problem of partitioning multiple tasks into subtasks and scheduling them in parallel. The authors employed the Advantages Actor-Critic (A2C) method to select computation nodes for each subtask and demonstrated an optimal task split ratio function. However, the entirety of the aforementioned research presumed that tasks could be fractionated into numerous parallel processing elements with arbitrary ratios. In reality, the multiple subtasks constituting a task possess interdependencies and cannot be divided indiscriminately. In this paper, we propose partitioning the task into multiple segments in parallel, guided by the outputs, as shown in Fig. \ref{task}. Subsequently, we allocate each task partition to distinct computational nodes for processing. As such, joint optimization of task partitioning actions and parallel scheduling strategies becomes imperative. Additionally, the optimization aim primarily revolves around energy consumption \cite{1,3,5}, delay \cite{8,10,11}, and the tradeoff between these two aspects \cite{12}. For example, constrained by energy consumption, The authors in \cite{8} introduced a unified resource allocation strategy for communication and computation to mitigate delay. Nevertheless, the preceding studies mentioned earlier, primarily concentrating on enhancing system-level performance, might result in unfairness among users \cite{31}. Conversely, this paper enhances system performance while preserving fairness among users.

\textit{2) Device-Assisted MENs:} In the previous work, users only had the option to offload computation tasks to the edge cloud. For example, in reference \cite{32}, researchers explored two offloading approaches: 1) local computing and 2) edge offloading, enabling users to offload computation tasks to the edge cloud via wireless channels. However, the idle computing resources of IoT devices are ignored. In other words, users who need computation offloading can utilize the idle computational resources of other IoT devices and offload some of the tasks to IoT devices with idle computational resources.

Subsequently, inspired by the fact that idle IoT devices can be used as computational nodes to provide computational services to users, device-assisted MEC emerges as a promising strategy to enhance offloading benefits. Consequently, users have three options for task processing locations: local, the edge server, and IoT devices with idle resources. For instance, reference \cite{33} explored cooperative computing in a non-orthogonal multiple access (NOMA)-enabled MEC system comprising an edge server, a user equipment (UE), and multiple full-duplex helpers. An adaptive cooperative computing strategy was devised. The authors in \cite{34} investigated the joint task assignment, communications rate, as well as computation frequency allocation within a D2D-enabled multi-helper MEC system under the assumption of binary task offloading. An efficient algorithm based on convex relaxation was proposed to minimize the cumulative computational delay. However, both \cite{33} and \cite{34} exclusively addressed the single-user multi-helper MEC model, and generalizing the outcomes to multi-user scenarios could be challenging.

Indeed, in multi-user scenarios, improving the overall system performance while ensuring fairness among users is challenging, attributed to the competitive dynamics among users. Additionally, due to the time-varying nature of idle resources and channel quality of IoT devices, frequent algorithm execution requires the proposed scheme to have low complexity. Consequently, joint optimization of task partitioning actions and parallel scheduling strategies for multiple offloaded users with competing relationships in the device-assisted MEN has not been fully investigated.

\textit{3) DRL in MEC:} DRL has emerged as a promising avenue for tackling real-time computation offloading challenges within MEC networks. Current DRL-based approaches adopt a value-based or policy-based strategy to learn the optimal translation from the “state” (e.g., time-varying system parameters) to the “action” (e.g., offloading decisions and resource allocation) \cite{40}. Widely employed value-based DRL techniques include deep Q-learning network (DQN) \cite{35}, double DQN \cite{36}, and dueling DQN \cite{37}, where a DNN is trained to estimate the state-action value function. Nevertheless, DQN-based methodologies become costly when confronted with sizable feasible discrete offloading actions, such as exponential growth concerning the count of IoT devices. In response, recent investigations have embraced a policy-based methodology, including the actor-critic DRL \cite{30} and the deep deterministic policy gradient (DDPG) techniques \cite{38}, \cite{39}. These methods directly formulate the optimal mapping from input state to output action using a DNN. For instance, in \cite{38}, an IoT device exclusively undertakes discrete offloading actions, encompassing integer offloading decisions and disregarded transmit power and offloading rate. Subsequently, an actor-critic DRL approach is employed to acquire the optimal mapping from continuous input states to discrete output actions. \cite{39} established two distinct learning components sequentially, aimed at producing discrete offloading decisions and continuous resource allocation. Precisely, \cite{39} utilized an actor DNN to yield the resource allocation solution, coupled with a DQN-based critic network to opt for the discrete offloading action. Analogous to \cite{35,36,37}, gauging the state-action value function within the critic network becomes arduous when faced with a multitude of potential offloading actions. Additionally, all the previously mentioned studies concentrated on binary offloading and partial offloading. Therefore, these concepts cannot be immediately applied to the task partitioning pattern investigated within this paper. The OTPPS framework proposed in this paper uses an actor DNN to generate a small number of candidate task partitioning actions. Subsequently, it employs a critic module, founded upon an enhanced graph theory algorithm, to discern the optimal task partitioning action via analytical resolution of the corresponding parallel scheduling problems. Thanks to the accurate evaluation of actions by the critic module, OTPPS is able to quickly converge to the optimal solution, even when the actor DNN provides only very few actions for the critic to select.

\subsection{Contribution and Organization}\label{I-C}
This study delves into a multi-user multi-helper MEC network as depicted in Fig. \ref{system}, where users’ tasks adhere to the structure illustrated in Fig. \ref{task}. The objective is to devise an algorithm for real-time task partitioning and parallel scheduling, ensuring fairness among users while minimizing overall task processing latency. To tackle this challenge, we introduce an innovative framework termed online task partitioning and parallel scheduling (OTPPS), synergizing the strengths of graph theory algorithms and DRL. OTPPS empowers real-time online optimization decision-making amidst dynamic system conditions. The primary contributions of this research encompass:
\begin{enumerate}
\item{A fairness-aware delay minimization problem is proposed to minimize the maximum normalized delay among all UEs, distinguishing this research from previous works.}
\item{To solve this problem, this paper proposes a novel OTPPS framework that combines the advantages of graph theory algorithms and DRL. Specifically, we integrate graph theory-based optimization and DRL methodologies to tackle the non-linear programming (NLP) issue for each time frame with very low computational complexity.}
\item{OTPPS employs an actor-critic structure to address the per-frame NLP problem. The actor module is a DNN that learns the optimal task partitioning action based on the input environment parameters. Meanwhile, the critic module evaluates the task partitioning action by resolving the parallel scheduling issue with an enhanced graph theory algorithm. In contrast to the traditional actor-critic structure that incorporates a model-free DNN in the critic module, the proposed framework capitalizes on model-derived insights for precise action evaluation, thus reaping the benefits of heightened robustness and swifter convergence in the DRL training procedure.}
\item{OTPPS employs a sliding threshold quantization approach to yield candidate task partitioning actions, effectively striking a balance between exploration and exploitation facets (i.e., performance-focused or diversity-oriented) in the DRL algorithm design to expedite training convergence. Concurrently, simulation results corroborate the capacity of the suggested mechanism to enhance system-wide performance while upholding user fairness, and verify the convergence and efficacy of the proposed algorithm.}
\end{enumerate}

The remainder of this paper is organized as follows. Section II introduces the system model and problem formulation. Sections III and IV provide the OTPPS framework and describe the underlying algorithms in detail. Section V presents the simulation results and performance analysis. Finally, Section VI concludes this paper.

\begin{figure*}[b]
\centerline{\includegraphics[width=0.7\textwidth]{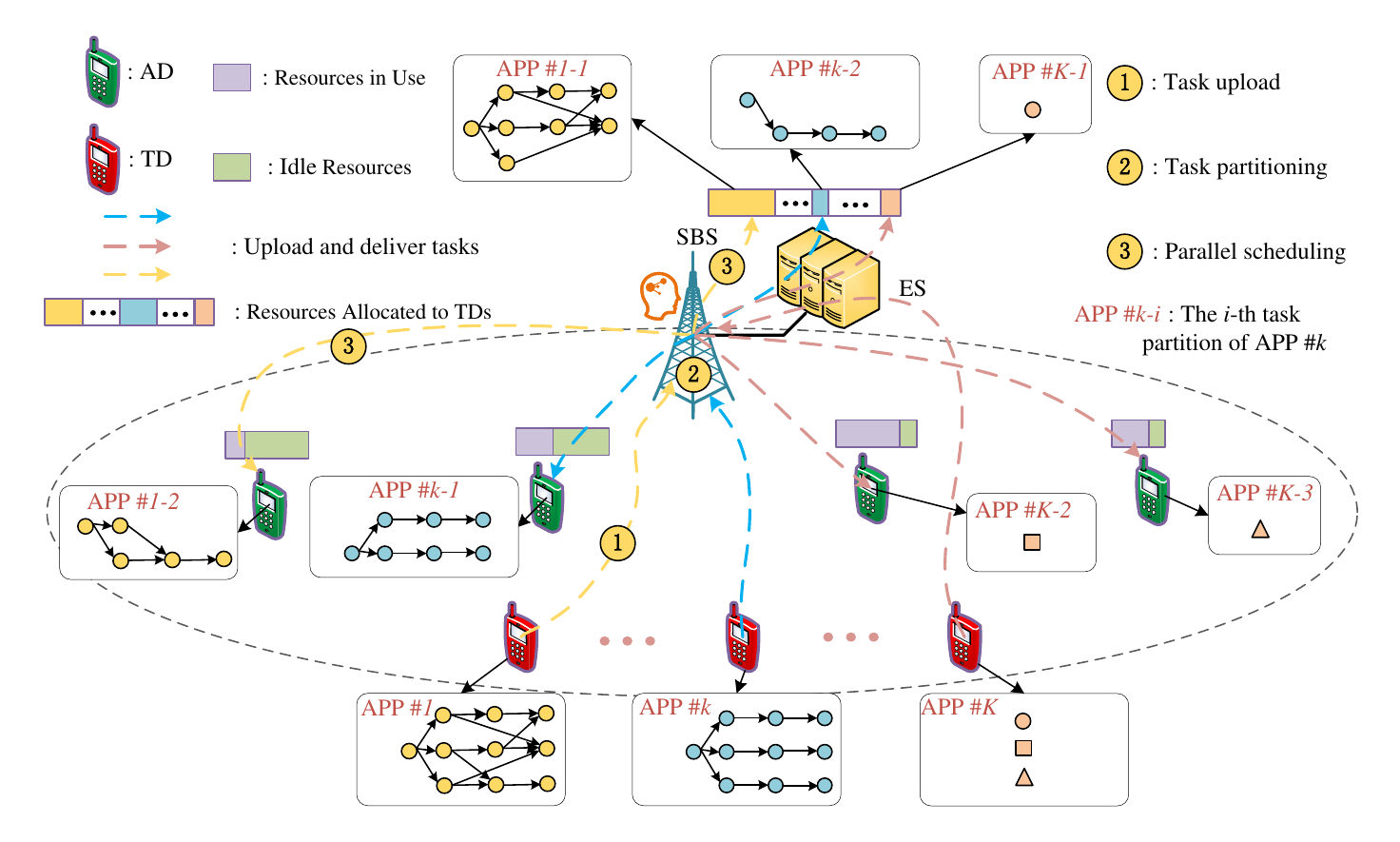}}
\caption{An illustration of task partitioning and parallel scheduling in the device-assisted MEN.}
\label{system}
\end{figure*}

\begin{table}[]
\centering
\caption{Summary of system notations}
\label{notation}
\resizebox{\columnwidth}{!}{%
\begin{tabular}{@{}ll@{}}
\toprule
Notation                                             & Description                                                                                                                  \\ \midrule
$\mathcal{N}$                                        & The set of task IoT devices                                                                                                  \\
$\mathcal{M}$                                        & The set of auxiliary IoT devices                                                                                             \\
$\mathcal{S}$                                        & \begin{tabular}[c]{@{}l@{}}The set of resource devices (including ES 0 and auxiliary IoT\\ devices)\end{tabular}             \\
$\mathcal{T}_n$                                      & The task of TD $n$                                                                                                           \\
$N_{n,res}$                                          & The number of results for $\mathcal{T}_n$                                                                                    \\
$r_{n,i}$                                            & The $i$th result of $\mathcal{T}_n$                                                                                          \\
$\boldsymbol{\mathcal{T}}_n^t$                       & The set of task partitions for TD $n$ in the $t$th time frame                                                                \\
$\mathcal{T}_{n,i}^t$                                & The $i$th task partition of TD $n$ in the $t$th time frame                                                                   \\
$N_{n,part}^t$                                       & The number of task partitions for TD $n$ in the $t$th time frame                                                             \\
$\boldsymbol{\mathcal{U}}_t$                         & The task partitioning strategy in the $t$th time frame                                                                       \\
$u_t^{n,i}$                                          & An indicator variable denotes that $r_{n,i}$ is output by $\mathcal{T}_{n,u_t^{n,i}}^t$                                      \\
$\boldsymbol{\mathcal{W}}_t$                         & The parallel scheduling strategy in the $t$th time frame                                                                     \\
$w_t^{n,i}$                                          & An indicator variable denoting the processing position of $\mathcal{T}_{n,i}^t$                                              \\
$\boldsymbol{\mathcal{H}}^t$                         & \begin{tabular}[c]{@{}l@{}}The set of wireless channel fading coefficients\end{tabular}            \\
$\boldsymbol{\mathcal{F}}_0^t$                       & \begin{tabular}[c]{@{}l@{}}The set of computation resources allocated by ES 0 to TDs\end{tabular} \\
$\boldsymbol{\mathcal{F}}_{\mathcal{M}}^t$           & The set of idle computational resources for ADs                                                                              \\
$L_n^t$                                              & The completion time of $\mathcal{T}_n$ in the $t$th time frame                                                               \\
$L_{n,i}^t$                                          & The completion time of $\mathcal{T}_{n,i}^t$                                                                                 \\
$T_n^{max}$                                          & The maximum acceptable delay for $\mathcal{T}_n$                                                                             \\
$\eta_n^t$                                           & The normalized delay for $\mathcal{T}_n$ in the $t$th time frame                                                             \\
$\hat{\boldsymbol{\mathcal{U}}}_{t}$                 & Relaxed task partitioning action                                                                                             \\
$\widetilde{\boldsymbol{\mathcal{U}}}_t^*$           & The normalized value of $\boldsymbol{\mathcal{U}}_t^*$                                                                       \\
$\overline{\boldsymbol{\mathcal{U}}}_t^{q^{\prime}}$ & The $q^{\prime}$th candidate task partitioning action                                                                        \\
$\overline{\eta}_t^{q^{\prime}}$                     & The normalized delay achieved with $\overline{\boldsymbol{\mathcal{U}}}_t^{q^{\prime}}$                                      \\
$Q$                                                  & The number of quantized task partitioning actions                                                                            \\
$Q'$                                                 & The number of candidate task partitioning actions                                                                            \\
$\theta_t$                                           & The parameters of the DNN                                                                                                    \\
$\Omega$                                             & The training interval of the DNN                                                                                             \\ \bottomrule
\end{tabular}%
}
\end{table}

\section{System model and problem statement}
\subsection{Network Model}\label{II-A}
Consider an IoT system comprising a smart base station (SBS) integrated with an edge server (ES 0) and multiple IoT devices. Similar to the previous work \cite{1}, the IoT devices in the network are categorized into two types: task IoT devices (TDs) and auxiliary IoT devices (ADs). The set $\mathcal{N}=\{1, 2, ..., N\}$ of TDs consists of all IoT devices which have limited computation resources and computation-intensive tasks to perform, so TDs can only offload their tasks. On the other hand, ADs consist of a set of auxiliary IoT devices denoted as $\mathcal{M}=\{1, 2, ..., M\}$, which possess sufficient computation capabilities. Each IoT device in $\mathcal{M}$ has idle computation resources to assist a TD in performing its task. For simplicity later, we refer to ES 0 and ADs together as resource devices (RDs), denoted as $\mathcal{S}=\{0\}\cup \mathcal{M}$.

As shown in Fig. \ref{task}(a), we assume that the tasks of TDs require to output mutipule results and can be partitioned \cite{19}. Thus, the task of TD $n$ can be partitioned into multiple task parts, each of which is processed on an IoT device or the edge server. It is important to note that each IoT device in $\mathcal{M}$ can serve at most one TD within a specific time frame \cite{1,52}. Consequently, a task of TD $n$ can be processed collaboratively by multiple IoT devices and ES 0.

A simple illustration of the considered scenario is shown in Fig. \ref{system}. It illustrates the process of task uploading, task partitioning, and parallel scheduling, which alleviates the workload of the ES 0 by utilizing idle computing resources of IoT devices and reduces delay through parallel processing. The important notations used in the rest of this paper are summarized in Table \ref{notation}.

\subsection{Task Model}\label{II-B}
This study explores a general partition pattern, as shown in Fig. \ref{task}(a), which duplicates dependent subtasks to multiple partitions when necessary \cite{19}. This partition pattern allows for various methods of task decomposition, as illustrated in Fig. \ref{task}(b).

The task of TD $n$ is denoted as $\mathcal{T}_n = (Z_n, C_n)$, where $Z_n$ represents the data size offloaded over wireless links, and $C_n$ is the number of CPU cycles required to process the entire task. Specifically, we consider result-partitioned-oriented applications, requiring to output multiple results that are independent of each other but dependent on subtasks that may overlap. Thus, these applications can be arbitrarily partitioned for parallel processing based on the output. $N_{n, res}$ represents the number of results for task $\mathcal{T}_n$ and $N_{n, part}^t$ denotes the number of task partitions after $\mathcal{T}_n $ has been partitioned in the $t$th time frame, so $N_{n, part}^t \le N_{n, res}$. $\mathcal{R}_n=\{r_{n,1},r_{n,2},...,r_{n,N_{n, res}}\}$ denotes the set of results for $\mathcal{T}_n$. $\boldsymbol{\mathcal{T}}_n^t=\{\mathcal{T}_{n,1}^t,\mathcal{T}_{n,2}^t,...,\mathcal{T}_{n, N_{n, part}}^t\}$ denotes the set of task partitions for TD $n$ in the $t$th time frame, where $\mathcal{T}_{n,k}^t=(Z_{n,k}^t, C_{n,k}^t)$ denotes the $k$th task partition. We denote $Z_{n,k}^t$ and $C_{n,k}^t$ as the data size offloaded over wireless links and the required CPU cycles of $\mathcal{T}_{n,k}^t $, respectively.

The identity of $\mathcal{T}_{n,k}^t$ is denoted as ${\bf{q}}_{n,k}^t = [q_{n,k,1}^t,q_{n,k,2}^t,...,\\q_{n,k,{N_{n,res}}}^t]$, where $q_{n,k,i}^t \in \{ 0,1\}$ is an indicator variable. If $q_{n,k,i}^t = 1$, it indicates that $\mathcal{T}_{n,k}^t$ needs to output the result $r_{n,i}$; otherwise, it is zero. Since each task partition needs to output at least one result, every task partition must satisfy the condition: $\sum\limits_{i = 1}^{{N_{n, res}}} {q_{n,k, i}^t} > 0$. Due to the dependencies among subtasks, when $\mathcal{T}_{n,{k_1}}^t,{\cal T}_{n,{k_2}}^t$ and 
$\mathcal{T}_{n,{k_3}}^t$ satisfy ${\bf{q}}_{n,{k_1}}^t = {\bf{q}}_{n,{k_2}}^t{\rm { + }}{\bf{q}}_{n,{k_3}}^t$, it can be deduced that $Z_{n,{k_1}}^t \le Z_{n,{k_2}}^t + Z_{n,{k_3}}^t$ and $C_{n,{k_1}}^t \le C_{n,{k_2}}^t + C_{n,{k_3}}^t$, as illustrated in Fig. \ref{task}(b) (S1 and S2).

\subsection{Communication Model}\label{II-C}
Based on the aforementioned descriptions and definitions, the communication process for task partitioning and parallel scheduling in the device-assisted MEN consists of the following steps (as illustrated in Fig. \ref{commPro}).

\begin{figure}[t]
\centerline{\includegraphics[width=0.5\textwidth]{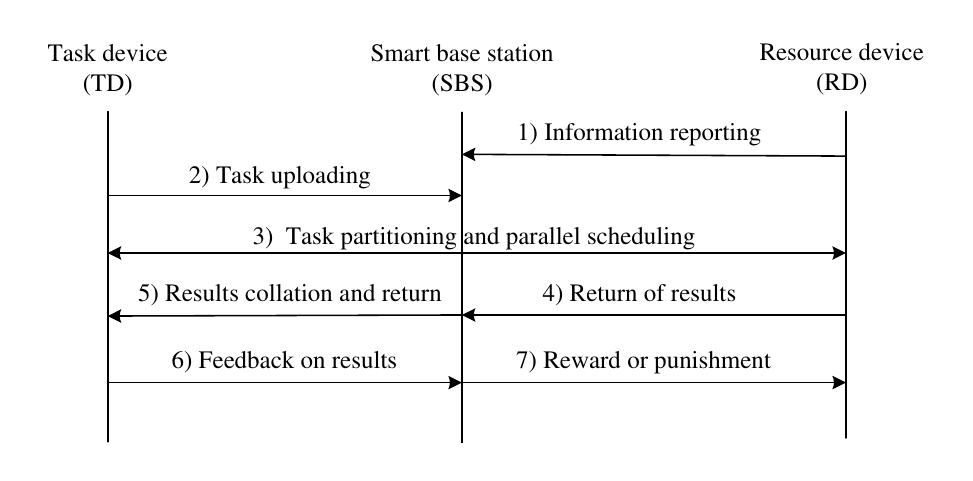}}
\caption{An illustration of the communication process.}
\label{commPro}
\end{figure}

\begin{enumerate}
\item{At the end of each time frame, all ADs provide a report to the SBS regarding the available computational resources and their respective positions for the next time frame. Additionally, ES 0 reports the allocation of its resources. Simultaneously, the SBS measures the channel fading coefficients between itself and all ADs based on the received signals.}
\item{At the beginning of each new time frame, the TDs generate tasks and offload them to the SBS. They also provide information about their available computational resources and current positions. Similarly, based on these received signals, the SBS measures the channel fading coefficients between itself and all TDs.}
\item{The intelligent agent of SBS makes task partitioning and parallel scheduling decisions utilizing the collected information and then schedules the divided task partitions to different computing nodes (including TDs and RDs) for parallel processing. It is assumed that only TD $n$'s task can be processed on TD $n$.}
\item{RDs complete the calculation and return the results to the SBS.}
\item{The SBS transmits the received results back to the corresponding TDs.}
\item{Each TD provides a report to the SBS indicating that the success of computation offloading, specifically whether each sub-result meets the expected criteria.}
\item{If the offloading task partition is successfully accomplished, the SBS will provide a predetermined reward to the RD. However, in the event of failure, the RD would be subjected to paying a penalty to the SBS.}
\end{enumerate}

This paper primarily focuses on steps 1 to 5 of the proposed process. The work in \cite{41} addressed the design of rewards and penalties for steps 6 and 7. The overall process primarily revolves around OFDMA-based wireless communication.

The wireless transmission process of task data involves the uplinks between TDs and the SBS, the downlinks between the SBS and ADs, and the downlinks between the SBS and TDs. The network spectrum of the entire system is divided into multiple channels denoted as $\mathcal{J}=\{1,2,..., J\}$, which are managed by the SBS. Assuming that the downlink and uplink transmissions experience the same noise, the maximum achievable rate (in bps) for the uplink and downlink over an additive white Gaussian noise (AWGN) channel can be obtained as follows:
\begin{equation}
\label{eq1}
{R^{ul}}(n,t) = W{\log _2}(1 + \frac{{{p_n}|{h_{ul}}(n,t){|^2}}}{{d_{n,0}^{{\beta _l}}{N_0}}}),
\end{equation}
\begin{equation}
\label{eq2}
{R^{dl}}(n,t) = W{\log _2}(1 + \frac{{{p_0}|{h_{dl}}(n,t){|^2}}}{{d_{n,0}^{{\beta _l}}{N_0}}}),
\end{equation}
and
\begin{equation}
\label{eq3}
{R^{dl}}(m,t) = W{\log _2}(1 + \frac{{{p_0}|{h_{dl}}(m,t){|^2}}}{{d_{m,0}^{{\beta _l}}{N_0}}}).
\end{equation}
The three equations presented above represent the maximum transmission rate of the uplink of TD $n$, the downlink of TD $n$, and the downlink of AD $m$ in the $t$th time frame. In these equations, $W$ denotes the channel bandwidth, $N_0$ represents the noise power, $\beta_l$ is the path loss exponent, and $h_{ul}(n,t), h_{dl}(n,t),$ and $h_{dl}(m,t)$ represent the channel fading coefficients of the uplink of TD $n$, the downlink of TD $n$, and the downlink of AD $m$ in the $t$th time frame, respectively. The variables $d_{n,0}$ and $d_{m,0}$ denote the distances between TD $n$ and the SBS, and AD $m$ and the SBS, respectively. Additionally, $p_n$ and $p_0$ represent the transmitting power of TD $n$ and the SBS, respectively. It is assumed that the IoT system is quasi-static, meaning that the system state remains unchanged during each time frame.

\subsection{Computing Model}\label{II-D}
Considering that computation offloading and resource allocation decisions have different time sensitivities, simultaneously making these decisions will lead to the offloading process waiting for resource allocation, thus reducing its real-time requirement. Alternatively, it will cause the system to orchestrate resources too frequently, resulting in high operating expenses and instability \cite{19}. Therefore, resource allocation and computation offloading decisions generally operate on different timescales. At the time of the offloading decision, the resource allocation is typically already determined. In this paper, our focus is on the offloading decision. We denote ${\boldsymbol{{\cal F}}_0^t} = \{ f_{0,1}^t,f_{0,2}^t,...,f_{0, N}^t\} $ as the computation resources allocated by the ES 0 to TDs in the $t$th time frame. ${\boldsymbol{{\cal F}}_{\mathcal{M}}^t} =\{f_m^t| m\in \mathcal{M}, t=1,2,\cdots\}$ represents the idle computational resources that AD $m$ can provide to the TDs in the $t$th time frame. ${\boldsymbol{{\cal F}}_\mathcal{N}} =\{f_n| n\in \mathcal{N}\}$ represents the computing resources of TD $n$.

To indicate the processing location of a task partition, we introduce an indicator variable $w^{n, i}_t$, which denotes the processing location of the $i$th task partition of TD $n$ at the $t$th time frame. Parallel scheduling decision is defined as ${\boldsymbol{{\cal W}}_t} = \{ \textbf{w}^1_t,\textbf{w}^2_t,..., \textbf{w}^N_t\} $, where $\textbf{w}^n_t = \{ w^{n,i}_t \in {\cal S}\cup \{M+1\}|i \in \{ 1,2,...,N_{n,part}^t\} \} $. $ w^{n, i}_t = M+1$ indicates that $\mathcal{T}_{n,i}^t$ is executed on TD $n$. Therefore, the computational delay $L_{n,i}^{t,comp}(w^{n,i}_t)$ of the task partition ${\cal T}_{n,i}^t$ can be expressed as
\begin{equation}
\label{eq4}
L_{n,i}^{t,comp}(w^{n,i}_t) = \left\{ \begin{array}{l}
\frac{{C_{n,i}^t}}{{f_{0,n}^t}},\quad {\rm{if}}\;w^{n,i}_t = 0,\\
\frac{{C_{n,i}^t}}{{f_m^t}}, \quad {\rm{if}}\;w^{n,i}_t = m,m \in {\cal M},\\
\frac{{C_{n,i}^t}}{{{f_n}}},\quad {\rm{if\ }}w^{n,i}_t = M+1.
\end{array} \right.
\end{equation}

\subsection{Delay Analysis}\label{II-E}
Section \ref{II-C} discusses the various stages involved in IoT task processing, including uploading input data, task partitioning, parallel scheduling, and execution, as well as results return. The transfer time for results return is considered negligible, as the result after processing is typically small \cite{31,42}. Furthermore, we do not account for the time required for algorithm execution and task partitioning. In Section \ref{V-B}, it is illustrated that the algorithm execution time is negligible. Subsequently, we provide a detailed description of the main components contributing to the overall delay.
\begin{enumerate}
\item{Uploading input data: The delay in the $t$th time frame for TD $n$ to upload IoT task data is defined as $L_n^{t,ul} = \frac{{{Z_n}}}{{{R^{ul}}(n,t)}}$.}
\item{Parallel scheduling and execution: Assume that in the $t$th time frame, the intelligent agent partitions the task $\mathcal{T}_n$ into $N_{n, part}^t$ task parts, denoted as $\{ {\cal T}_{n, i}^t|i \in \{ 1,2,..., N_{n, part}^t\} \}$, and schedules these task partitions to different locations for parallel processing. The transmission delay $ L_{n,i}^{t,trans}(w^{n,i}_t)$ of ${\cal T}_{n,i}^t$ can be expressed as
\begin{equation}
\label{eq5}
L_{n,i}^{t,trans}(w^{n,i}_t) = \left\{ \begin{array}{l}
0,\;\;\; \qquad{\rm{if\;}}w^{n,i}_t = 0,\\
\frac{{Z_{n,i}^t}}{{{R^{dl}}(m,t)}},\; {\rm{if\;}}w^{n,i}_t = m,m \in {\cal M}\\
\frac{{Z_{n,i}^t}}{{{R^{dl}}(n,t)}}, \;\; {\rm{if\;}}w^{n,i}_t = M+1.
\end{array} \right.
\end{equation}
The computation time of ${\cal T}_{n, i}^t$ can be obtained using Equation (\ref{eq4}). Therefore, the completion time of ${\cal T}_{n, i}^t$ can be calculated as
\begin{equation}
\label{eq6}
L_{n,i}^t = L_n^{t,ul} + L_{n,i}^{t,trans} + L_{n,i}^{t,comp}.
\end{equation}
}
\end{enumerate}

The completion of task ${{\cal T}_n}$ depends on receiving results from all its task partitions, and the overall completion time of the task is determined by the last partition to finish. Therefore, the completion time of task ${{\cal T}_n}$ can be calculated as
\begin{equation}
\label{eq7}
L_n^t = \max \{ L_{n,1}^t,L_{n,2}^t,\cdots,L_{n,N_{n,part}^t}^t\}.
\end{equation}

\subsection{Problem Formulation}\label{II-F}
In the considered scenario, we assume that only the channel fading coefficients ${\boldsymbol{{\cal H}}^t} = \{ {h_{ul}}(n,t),{h_{dl}}(n,t),{h_{dl}}(m,t)|n \in {\cal N},m \in {\cal M}\} $ and the computation resources $\boldsymbol{{\cal F}}_0^t$ and $\boldsymbol{{\cal F}}_{\cal M}^t$ provided by RDs are time-varying, while the others are fixed parameters. We formulate a min-max normalized delay problem to achieve fair normalized delay (ND) among TDs while reducing the overall task completion time. The ND of TD $n$ is defined as $\eta _n^t = (\frac{{L_n^t}}{{T_n^{max}}}>1)?\infty:\frac{{L_n^t}}{{T_n^{max}}}$, where $T_n^{max}$ represents the maximum acceptable delay for task ${{\cal T}_n}$. If $L_n^t$ exceeds $T_n^{max}$, $\eta _n^t$ is set to infinity, ensuring timely task completion. This is crucial because many tasks are highly sensitive to latency, and processing tasks beyond their deadlines becomes meaningless for users \cite{56,57,58,59}. Specifically, we design a partitioning strategy ${\boldsymbol{{\cal U}}_t} = \{ {\bf{u}}^n_t = [u^{n,1}_t,u^{n,2}_t,...,u^{n,{N_{n, res}}}_t]|n \in {\cal N}\} $ for partitioning TDs' tasks in the $t$th time frame and a parallel scheduling strategy ${\boldsymbol{{\cal W}}_t} = \{ {\bf{w}}^n_t = [w^{n,1}_t,w^{n,2}_t,...,w^{n, N_{n, part}^t}_t]|n \in {\cal N}\} $ for task parts after partitioning to efficiently utilize IoT system resources. ${\bf{u}}^n_t$ represents the task partitioning strategy of TD $n$, where $u^{n,i}_t \in \{ 1,2,...,N_{n,part}^t\} $ represents that the $i$th result of task ${{\cal T}_n}$ will be output by its $u^{n,i}_t$th task partition. We assume that the TDs' tasks remain unchanged over time, and the execution programs for these tasks are pre-stored on the ES, which is a common practice in IoT systems such as smart factories, industrial parks and wireless sensor networks. Thus the optimization problem can be formulated as follows:
\begin{equation}
\begin{aligned}\label{P}
\mathcal{P}_1:\mathop {\emph{min} }\limits_{{\boldsymbol{{\cal U}}_t},{\boldsymbol{{\cal W}}_t}}& \;\mathop {\emph{max} }\limits_{n \in {\cal N}} \;\eta _n^t({\boldsymbol{{\cal H}}^t},\boldsymbol{{\cal F}}_0^t,{\boldsymbol{{\cal F}}_{{{\cal M}}}^t},{\boldsymbol{{\cal U}}_t},{\boldsymbol{{\cal W}}_t}) \\
\mbox{s.t.}\;&(a):
{\boldsymbol{{\cal U}}_t} = \{ {\bf{u}}_t^n|n \in {\cal N}\},  \\
&(b):{\bf{u}}_t^n = [u^{n,1}_t,u^{n,2}_t,...,{u^{n,N_{n,res}}_t}],\forall n \in {\cal N}, \\
&(c):u^{n,i}_t \in \{ 1,2,...,N_{n,part}^t\} ,\;\forall n \in {\cal N},\\
&(d):{\sum\limits_{i = 1}^{{N_{n,res}}} {{{\bf{1}}_{u^{n,i}_t = j}} \ge 1} } ,\forall j \in \{ 1,...,N_{n,part}^t\},\\
&(e):N_{n,part}^t \le {N_{n,res}},\\
&(f):{\boldsymbol{{\cal W}}_t} = \{ {\bf{w}}^n_t|n \in {\cal N}\},\\
&(g):{\bf{w}}^n_t = [w^{n,1}_t,...,w^{n,N_{n,part}^t}_t],\;\forall n \in {\cal N},\\
&(h):w^{n,i}_t \in {\cal S}\cup \{M+1\}\;\;\forall n \in {\cal N},\forall i,\\
&(i):\sum\limits_{n \in {\cal N}} {\sum\limits_{i = 1}^{N_{n,part}^t} {{{\bf{1}}_{w^{n,i}_t = m}} \in \{ 0,1\} } } ,\;\forall m \in {\cal M},
\end{aligned}
\end{equation}
where ${{\bf{1}}_{\{  \cdot \} }}$ is the indicator function and equals 1 (resp., 0) if the condition is true (resp., false). Constraints (8a)-(8e) pertain to the task partitioning method, while constraints (8f)-(8i) relate to the parallel scheduling policy. Specifically, (8a) and (8b) describe the task partitioning method. (8c) ensures that each result can be output by one of the task partitions. (8d) and (8e) impose constraints on the number of task partitions, guaranteeing that at least one result is output for each partition. On the other hand, (8f) and (8g) outline the parallel scheduling policy. (8h) specifies the range of values for each task partition scheduling strategy. (8i) ensures that each AD can serve at most one TD for computation. 

Clearly, problem $\mathcal{P}_1$ can be classified as a non-linear integer programming problem. The complexity of the given problem is known to be NP-hard, indicating that it is not feasible to find a solution within polynomial time \cite{43}.

\begin{theorem}
The min-max normalized delay (MMND) problem $\mathcal{P}_1$ is NP-hard.
\end{theorem}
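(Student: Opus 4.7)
My plan is to establish NP-hardness by a polynomial-time reduction from a well-known NP-hard problem. The most convenient choice is the minimum-makespan problem on identical parallel machines, $P||C_{\max}$ (equivalently, one can reduce from \textsc{Partition} or \textsc{3-Partition}), since this problem shares the min-max completion-time structure of $\mathcal{P}_1$. The overall strategy is to exhibit a restricted subclass of $\mathcal{P}_1$ instances that encodes an arbitrary instance of the source problem, so that any polynomial-time algorithm for $\mathcal{P}_1$ would yield one for the source problem, contradicting its known hardness.

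Concretely, given an instance of $P||C_{\max}$ with $k$ jobs of processing times $\{p_1,\ldots,p_k\}$, $m$ identical machines, and target makespan $B$, I would build an $\mathcal{P}_1$ instance with: a single TD ($N=1$) whose task has $N_{1,\mathrm{res}} = k$ independent results; a forced partitioning $\mathcal{U}_t$ in which each result sits in its own singleton partition, giving $N_{1,\mathrm{part}}^t = k$ and $C_{1,i}^t = p_i$; $m$ ADs of uniform computation capacity scaled so that $C_{1,i}^t/f_m^t = p_i$; negligible data sizes $Z_n, Z_{n,i}^t$ so that uplink and downlink delays vanish; and $T_1^{\max}=B$. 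Under this construction the scheduling subproblem (choice of $\mathcal{W}_t$) exactly reproduces $P||C_{\max}$: constraint (8i) becomes the one-partition-per-AD cap that mirrors the one-job-at-a-time machine restriction, and (\ref{eq6})--(\ref{eq7}) reduce to the standard makespan expression. Consequently, $\mathcal{P}_1$ attains $\eta_1^t \le 1$ if and only if the source instance admits a schedule of makespan at most $B$.

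To conclude, I need to argue that this is a legitimate restriction: the forced choice of $\mathcal{U}_t$ and the single-TD setup only remove degrees of freedom, so a polynomial-time algorithm for general $\mathcal{P}_1$ would in particular solve these restricted instances in polynomial time, contradicting the NP-hardness of $P||C_{\max}$. A brief structural check is required: every constructed instance satisfies (8a)--(8i), and the normalized delay $\eta_n^t$ with its $\infty$-penalty above $T_n^{\max}$ preserves the bijection between makespan-$\le B$ schedules and feasible $\mathcal{P}_1$ solutions with objective at most $1$.

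The main obstacle is less the reduction itself than ensuring that the very structured constraints of $\mathcal{P}_1$ (particularly (8d), which forces every task partition to output at least one result, and (8i), which caps each AD globally across TDs) are respected by the constructed instance without inadvertently trivializing or over-constraining it. Using a single-TD instance with singleton partitions sidesteps the subtlest interactions; however, one must confirm that the partitioning constraints (8c)--(8e) are consistent with the chosen $\mathcal{U}_t$, and that the per-AD exclusivity constraint (8i) does not artificially rule out valid scheduling counterparts of feasible $P||C_{\max}$ solutions. Once that verification is in place, the reduction goes through and NP-hardness of $\mathcal{P}_1$ follows immediately.
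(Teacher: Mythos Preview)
Your reduction from $P\|C_{\max}$ does not go through, and the obstacle is precisely the constraint you flag at the end but do not resolve. In $P\|C_{\max}$ the hardness comes from \emph{accumulation}: several jobs may be assigned to the same machine, and the makespan is $\max_m \sum_{i:\,w_i=m} p_i$. Constraint~(8i), however, forces each AD to receive \emph{at most one} task partition in total. With identical AD speeds and singleton partitions as you propose, every partition that lands on an AD finishes in time exactly $p_i$, and by~(\ref{eq7}) the task completion time is simply $L_1^t=\max_i p_i$---a constant independent of the assignment. There is no summation over partitions sharing a processor, so the combinatorial structure of $P\|C_{\max}$ never appears. What remains, once the partitioning is fixed, is a bottleneck assignment problem (indeed the paper's Theorem~2 classifies $\mathcal{P}_2$ this way), which is polynomial.

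The paper takes a different route: it fixes the \emph{scheduling} and shows that the residual \emph{partitioning} subproblem (deciding which results go into which partition, for a single TD with a known set of processing locations) encodes the multiple knapsack problem. That is, the hardness is located in the choice of $\boldsymbol{\mathcal{U}}_t$, not $\boldsymbol{\mathcal{W}}_t$. If you want to repair your argument, you would need either to drop constraint~(8i) from the instance (which you cannot, since it is part of $\mathcal{P}_1$), or to shift the reduction so that the NP-hard combinatorics live in the partitioning step rather than the scheduling step.
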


\newenvironment{proof}{{\noindent\it\quad Proof:}}{\hfill $\square$\par}
\begin{proof}
To prove the NP-hardness of problem $\mathcal{P}_1$, we first consider a specific instance of the problem in which the parallel scheduling policy is known, and there is only one TD. Additionally, the number of task partitions is predetermined, and the constraint (8d) is relaxed to $ {\sum\limits_{i = 1}^{{N_{n,res}}} {{{\bf{1}}_{u_{n,i}^t = j}} \ge 0} }$. Consequently, $\mathcal{P}_1$ is transformed into minimizing the normalized delay of the TD through results allocation. This specific instance can be readily reduced to a well-known NP-hard problem, namely the multiple knapsack problem \cite{43}.
\end{proof}

Multiple knapsack problem \cite{44}: The multiple knapsack problem involves a set of items and knapsacks, where each item has a specific profit and volume, and each knapsack has a defined capacity. The objective of the multiple knapsack problem is to select and assign disjoint subsets of items to different knapsacks in order to maximize the total profit. Furthermore, the capacity of each knapsack must be sufficient to accommodate the total volume of the assigned items. In our case, we can consider the $N_{n,res}$ results and $N_{n,part}^t$ task partitions as the items and backpacks, respectively. Thus, filling the items into knapsacks is equivalent to assigning the results to task partitions, aiming to minimize the delay. Given that problem $\mathcal{P}_1$ can be reduced to a multiple knapsack problem, it follows that $\mathcal{P}_1$ is also NP-hard.

\begin{figure}[t]
\centerline{\includegraphics[width=0.5\textwidth]{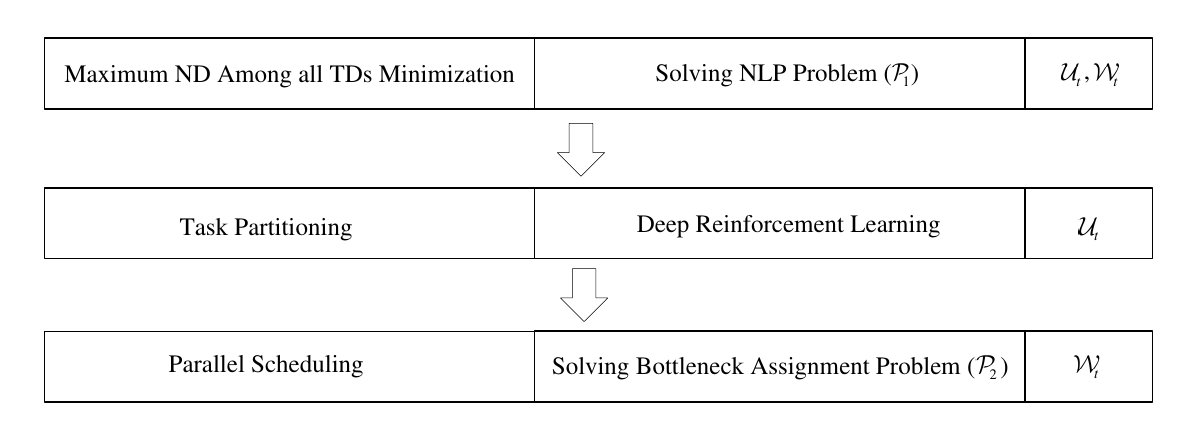}}
\caption{The two-level optimization structure of sloving $\mathcal{P}_1$.}
\label{solution}
\end{figure}

$\mathcal{P}_1$ is a non-linear programming (NLP) problem. However, when $\boldsymbol{\mathcal{U}}_t$ is given, $\mathcal{P}_1$ can be simplified to a bottleneck assignment problem \cite{45}, as described below:
\begin{equation}
\begin{aligned}\label{P}
\mathcal{P}_2:\mathop {\emph{min} }\limits_{{\boldsymbol{{\cal W}}_t}}& \;\mathop {\emph{max} }\limits_{n \in {\cal N}} \;\eta _n^t({\boldsymbol{{\cal H}}^t},\boldsymbol{{\cal F}}_0^t,{\boldsymbol{{\cal F}}_{{{\cal M}}}^t},{\boldsymbol{{\cal U}}_t},{\boldsymbol{{\cal W}}_t}) \\
\mbox{s.t.}&\quad(8f)-(8i).
\end{aligned}
\end{equation}
Accordingly, problem $\mathcal{P}_1$ can be decomposed into two sub-problems: the task partitioning problem and the parallel scheduling problem $\mathcal{P}_2$, as illustrated in Fig. \ref{solution}.

\begin{itemize}
\item{\emph{Task Partitioning (TP)}: The task partitioning problem involves determining the number of partitions for each IoT task and assigning each result to a specific partition. For a given IoT task $\mathcal{T}_n$, there are $N_{n, res}$ possible scenarios for the number of task partitions and ${(N_{n, part}^t)^{{N_{n, res}}}}$ possible scenarios for the assignment of task results when the number of task partitions is $N_{n, part}^t$. Consequently, task $\mathcal{T}_n$ has $\sum\limits_{i = 1}^{{N_{n,res}}} {{{(i)}^{{N_{n,res}}}}} $ distinct partitioning strategies. Therefore, the problem requires exploring among $\prod\limits_{n \in {\cal N}} {\sum\limits_{i = 1}^{{N_{n, res}}} {{{(i)}^{{N_{n, res}}}}}}$ potential partitioning strategies to find an optimal solution. Meta-heuristic search algorithms can utilize heuristic information inherent in the problem to guide the search process, reducing the search space and overall complexity \cite{46}. However, due to the exponentially large search space, these algorithms often require a significant amount of time to converge.}
\item{\emph{Parallel Scheduling (PS)}: Assuming a task partitioning policy is given, the objective of the parallel scheduling problem is to allocate task partitions to different processing locations for parallel execution, aiming to minimize the maximum normalized delay among all TDs. This problem can be regarded as a bottleneck allocation problem \cite{45}. To address this problem, we devise a low-complexity task allocation algorithm that aims to reduce the average delay while ensuring fairness among TDs.}
\end{itemize}

The primary difficulty in solving the problem lies in the TP problem. Traditional optimization algorithms are inherently unsuitable for real-time system optimization in the presence of fast-fading channels and time-varying resource provisioning. To tackle this issue, we propose a deep reinforcement learning-based online partitioning algorithm which can solve the TP problem in millisecond-level average running time.

\subsection{Preliminaries for the Proposed Framework}\label{II-G}

Prior to introducing the solution, this subsection provides the key descriptions and definitions utilized in the proposed framework.

\textbf{Hungarian Algorithm.} The Hungarian algorithm is a combinatorial optimization technique. The core of this algorithm is to find the maximum matching of a bipartite graph by finding augmentation paths [50]. Our study considers task partitions and computing nodes as distinct sets of vertices within a bipartite graph. Consequently, the Hungarian algorithm can be employed to attain the matching between task partitions and computing nodes. Nonetheless, this algorithm cannot ensure fairness among users. Therefore, enhancements are imperative to fulfill the objectives of this investigation.

\textbf{Experience Replay.} Within the realm of DRL, due to the temporal correlation among the samples obtained by the intelligent agent during environmental exploration, these samples do not exhibit independent and identical distribution characteristics. To mitigate the temporal correlation inherent in the samples, a viable approach involves the utilization of an experience replay buffer. This buffer amalgamates historical samples with contemporaneous ones, thereby diminishing data interdependence. Furthermore, experience replay confers the added benefit of rendering the samples reusable, consequently enhancing the efficiency of the learning process.

\textbf{Actor-Critic Algorithm.} The actor-critic algorithm combines the methods of policy gradient (actor) and function approximation (critic). The actor makes action selections grounded in the environmental state and current policy, while the critic module evaluates the action's score based on the actor's chosen action and the prevailing environmental state. Subsequently, the actor module refines its strategy using the score derived from the critic module, and the critic network updates its parameters based on the rewards garnered from the environment using the temporal difference (TD) algorithm. However, the efficacy of this framework hinges on the precision of the critic module's scoring. Furthermore, achieving convergence in the critic network proves challenging, and this challenge compounds with the incorporation of actor updates. To address the convergence dilemma, this study enhances the critic module by leveraging model information to derive precise action assessments. This intervention fosters a more robust and faster convergence of the DRL training process.

\section{Joint TP and PS}
In this section, we propose a low-complexity algorithm to tackle the PS problem. Subsequently, leveraging the solution obtained from the PS problem, we derive an approximately optimal solution for the TP problem.

\subsection{Parallel Scheduling}\label{III-A}
Given a task partitioning strategy $\boldsymbol{\mathcal{U}}_t^0$, the PS problem $\mathcal{P}_2$ can be formulated as a bottleneck allocation problem (BAP) based on Lemma 1.

\begin{lemma}
Under an optimal task partitioning and parallel scheduling policy, the task partitions of each TD are processed in parallel at different locations.
\end{lemma}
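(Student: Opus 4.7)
The plan is to proceed by a contradiction-plus-exchange argument. I would suppose that $(\boldsymbol{\mathcal{U}}_t^*,\boldsymbol{\mathcal{W}}_t^*)$ is an optimal policy but that some TD $n$ has two of its partitions, say $\mathcal{T}_{n,i}^t$ and $\mathcal{T}_{n,j}^t$, assigned to a common computing node $w$. The aim is to build a modified feasible policy in which these two partitions are fused into one, and to show that the resulting min-max normalized delay is no worse, contradicting the need for co-location in an optimum (equivalently, exhibiting an optimum in which every TD's partitions are parallel at distinct locations).

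The key ingredient is the subadditivity property recorded in Section II-B: two partitions with identity vectors $\mathbf{q}_{n,i}^t$ and $\mathbf{q}_{n,j}^t$ can be replaced by a single partition with identity $\mathbf{q}_{n,k}^t=\mathbf{q}_{n,i}^t+\mathbf{q}_{n,j}^t$ satisfying $Z_{n,k}^t\le Z_{n,i}^t+Z_{n,j}^t$ and $C_{n,k}^t\le C_{n,i}^t+C_{n,j}^t$, since dependent subtasks no longer need to be duplicated across the fused partition boundary. I would use this to produce a new policy that differs from $(\boldsymbol{\mathcal{U}}_t^*,\boldsymbol{\mathcal{W}}_t^*)$ only in that $\mathcal{T}_{n,i}^t$ and $\mathcal{T}_{n,j}^t$ are replaced by $\mathcal{T}_{n,k}^t$, which is still placed on node $w$. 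Plugging the subadditivity inequalities into (4) and (5) yields $L_{n,k}^{t,trans}\le L_{n,i}^{t,trans}+L_{n,j}^{t,trans}$ and $L_{n,k}^{t,comp}\le L_{n,i}^{t,comp}+L_{n,j}^{t,comp}$, so by (6) the merged partition's completion time is at most the amount of wall-clock time that the original two partitions jointly occupy on node $w$ when served one after the other. Hence $L_n^t$ cannot increase, the completion times of the other TDs are unaffected, and feasibility (including (8d)--(8i)) is preserved, yielding the desired contradiction after iteration over all co-located pairs.

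The main obstacle I expect is that formulas (4)--(5) are written as if every partition enjoys exclusive use of its node's resource, which is incompatible with the very scenario I am hypothesizing, so I have to be careful about what "completion time on a shared node" means. I would dispose of this by cases on $w$: the possibility $w\in\mathcal{M}$ is ruled out directly by constraint (8i), leaving only $w=0$ (where $f_{0,n}^t$ is earmarked for TD $n$) and $w=M+1$ (TD $n$'s own device). In both remaining cases, any physically realizable scheduling of the two partitions on node $w$ -- whether strict serialization or concurrent execution at a shared reduced rate -- contributes a per-TD delay of at least $L_n^{t,ul}+L_{n,i}^{t,trans}+L_{n,j}^{t,trans}+L_{n,i}^{t,comp}+L_{n,j}^{t,comp}$ to $L_n^t$, which is precisely the quantity that the merged partition dominates by the subadditivity bound. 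This pins down the exchange step rigorously and closes the argument.
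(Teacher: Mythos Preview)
Your approach is essentially the same as the paper's: assume an optimal policy co-locates two partitions of a TD, merge them using the subadditivity property of $Z_{n,k}^t$ and $C_{n,k}^t$ from Section~II-B, and conclude that the min--max normalized delay cannot increase. Your additional case analysis on $w$ (ruling out $w\in\mathcal{M}$ via constraint~(8i) and invoking serialization on the shared resource for $w\in\{0,M{+}1\}$) makes explicit a resource-contention point the paper's proof leaves implicit, but the core exchange argument is identical.
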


\begin{proof}
Assume that $\boldsymbol{\mathcal{U}}_t^*$ and $\boldsymbol{\mathcal{W}}_t^*$ are the optimal task partitioning action and parallel scheduling strategy, where two task partitions of TD $n$ are processed at the same location, i.e., $w_{n, i}^t = w_{n,j}^t, i \ne j$. We can then construct new task partitioning action $\boldsymbol{\mathcal{U}}_t'$ and parallel scheduling strategy $\boldsymbol{\mathcal{W}}_t'$, where $\boldsymbol{\mathcal{U}}_t'$ combines the $i$th and $j$th task partitions of TD $n$ from $\boldsymbol{\mathcal{U}}_t^*$ into a single task partition. The only difference between $\boldsymbol{\mathcal{W}}_t'$ and $\boldsymbol{\mathcal{W}}_t^*$ is the number of task partitions, while the scheduling decision for each task partition remains the same. According to Section \ref{II-B}, the workload and data transfer requirements after merging the two task partitions do not exceed  their cumulative sum. As a result, $ \mathop {\max }\limits_{n \in {\cal N}} \;\eta _n^t({\boldsymbol{{\cal H}}^t},\boldsymbol{{\cal F}}_0^t,{\boldsymbol{{\cal F}}_{{{\cal M}}}^t},{\boldsymbol{{\cal U}}_t}',{\boldsymbol{{\cal W}}_t}') \le \mathop {\max }\limits_{n \in {\cal N}} \;\eta _n^t({\boldsymbol{{\cal H}}^t},\boldsymbol{{\cal F}}_0^t,{\boldsymbol{{\cal F}}_{{{\cal M}}}^t},\boldsymbol{{\cal U}}_t^*,\boldsymbol{{\cal W}}_t^*)$, which contradicts the initial assumption. Therefore, under the optimal task partitioning action and parallel scheduling strategy, the task partitions of each TD are processed in parallel at different locations.
\end{proof}

\begin{theorem}
The PS problem can be classified as a bottleneck allocation problem.
\end{theorem}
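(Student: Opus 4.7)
The plan is to exhibit an explicit bipartite-graph representation of $\mathcal{P}_2$ whose feasibility structure and objective coincide with those of the canonical bottleneck allocation problem (BAP). The key enabler is Lemma 1, which rules out the possibility that two partitions of the same TD compete for the same location in any optimal schedule; together with constraint (8i), this reduces the admissible parallel-scheduling strategies to matchings rather than general many-to-one assignments.

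Concretely, I would first build a bipartite graph $G=(V_1,V_2,E)$ in which $V_1$ collects every task partition $\mathcal{T}_{n,i}^t$ produced by the fixed partitioning action $\boldsymbol{\mathcal{U}}_t^0$, and $V_2$ collects the available processing slots. The slots in $V_2$ are the $M$ ADs, a per-TD local slot that only accepts partitions of that TD, and, for each TD $n$, a virtual slot of ES 0 corresponding to the pre-allocated share $f_{0,n}^t$. Modelling ES 0 as $N$ virtual unit-capacity slots rather than a single shared node is the crux: it converts the pre-allocation in Section \ref{II-D} into unit-matching constraints, so that (8h) and (8i) translate exactly to ``each vertex of $V_2$ is used at most once,'' while still permitting several TDs to be served in parallel by ES 0.

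Next, I would weight each edge $(\mathcal{T}_{n,i}^t,s)\in E$ by the normalized partition delay $L_{n,i}^t(s)/T_n^{\max}$ obtained from (\ref{eq4})--(\ref{eq6}). Because $L_n^t=\max_i L_{n,i}^t$ by (\ref{eq7}) and the outer $\max_{n\in\mathcal{N}}\max_i$ is nothing but the maximum over all assigned edges, the objective of $\mathcal{P}_2$ reduces to the maximum weight of an edge in the chosen matching. Since a BAP is, by definition, the problem of selecting a feasible matching in a weighted bipartite graph so as to minimise the largest selected edge weight, $\mathcal{P}_2$ fits this definition verbatim, which establishes the classification.

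The main obstacle I anticipate is the modelling step rather than any quantitative manipulation: I must verify that the capacity structure of ES 0 and the TD-locality restriction (a partition of TD $n$ may run on TD $n$ but not on TD $n'\neq n$) are faithfully captured by unit-capacity matching constraints, and that Lemma 1 genuinely rules out parallel use of the same slot by two partitions of one TD at optimum. My remedy is the virtual-slot construction above together with an explicit bijection argument: every feasible $\boldsymbol{\mathcal{W}}_t$ satisfying (8f)--(8i) induces a matching in $G$ with the same maximum normalized delay, and conversely every matching in $G$ yields such a $\boldsymbol{\mathcal{W}}_t$. Once this correspondence is in place, the remaining identification with BAP is a relabelling.
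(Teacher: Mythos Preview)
Your proposal is correct and follows essentially the same route as the paper: rewrite the $\min_n\max$ over TDs as a $\min\max$ over individual task partitions via (\ref{eq7}), invoke Lemma~1 together with constraint (8i) to reduce the relevant schedules to one-to-one assignments, and identify the result with the canonical BAP. Your explicit virtual-slot construction (one ES-0 slot and one local slot per TD, plus the $M$ ADs) is precisely what the paper encodes in the $(2N+M)$-column matrix $\mathbf{D}^t$ immediately after the theorem; the paper's own proof is more terse, simply stating the reformulated problem $\mathcal{P}_3$ and citing Lemma~1 and the BAP definition without spelling out the bipartite graph.
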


\begin{proof}
Considering that $\eta _n^t = (\frac{{L_n^t}}{{T_n^{max}}}>1)?\infty:\frac{{L_n^t}}{{T_n^{max}}}$, as stated in (\ref{eq7}), $\mathcal{P}_2$ can be modified as follows:
\begin{equation}
\begin{aligned}\label{P}
\mathcal{P}_3:\mathop {\emph{min} }\limits_{{\boldsymbol{{\cal W}}_t}} \;&\mathop {\emph{max} }\limits_{n \in {\cal N},i \in \{ 1,\cdots,N_{n,part}^t\} } \eta_{n,i}^t \\
\mbox{s.t.}&\quad(8f)-(8i),
\end{aligned}
\end{equation}
where $\eta_{n,i}^t=(\frac{{L_{n,i}^t}}{{T_n^{max}}}>1)?\infty:\frac{{L_{n,i}^t}}{{T_n^{max}}}$.

Bottleneck allocation problem (BAP) \cite{45}: The BAP involves multiple agents and tasks. Each agent can be assigned to a task, incurring varying costs depending on the agent-task assignment. The objective is to assign exactly one agent to each task in order to minimize the maximum cost among all assignments. By applying Lemma 1 and considering the aforementioned description, it can be readily demonstrated that $\mathcal{P}_3$ qualifies as a BAP.
\end{proof}

To solve $\mathcal{P}_3$, we compute the delay for each task partition when processed at different locations using (1)-(6). Subsequently, we construct a normalized delay matrix $\mathbf{D}^t$ as depicted in Fig. \ref{matrix}. The matrix contains the normalized delay values for task partitions of all TDs when executed at various locations. Specifically, $\eta _{n,i}^{t,l} = (\frac{{L_{n,i}^{t,l}}}{{T_n^{max}}}>1)?\infty:\frac{{L_{n,i}^{t,l}}}{{T_n^{max}}}$ represents the normalized delay when the $i$th task partition of TD $n$ is executed at position $l$ during the $t$th time frame. Here, $l=M+1$ indicates that $\mathcal{T}_{n,i}^t$ is processed on TD $n$, $l=m$ (where $m\in\{1,2,\cdots, M\}$) signifies processing on AD $m$ ($m\in\mathcal{M}$), and $l=0$ indicates processing on ES $0$.

\begin{figure*}[t]
\centerline{\includegraphics[width=0.9\textwidth]{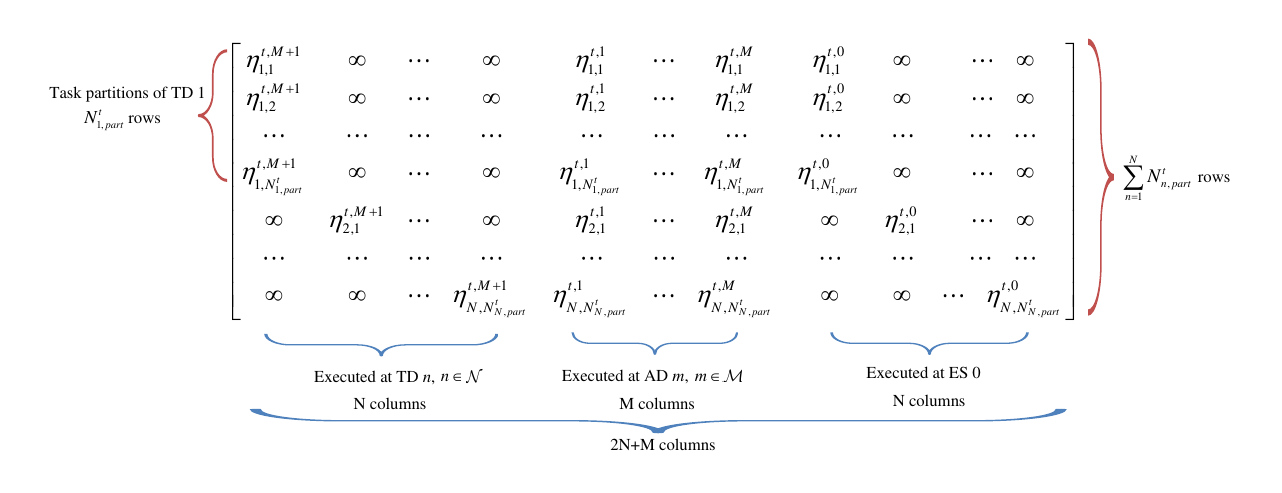}}
\caption{Normalized delay matrix $\mathbf{D}^t$.}
\label{matrix}
\end{figure*}

Fig. \ref{matrix} provides an illustration of matrix $\mathbf{D}^t$, which comprises $\sum\limits_{n = 1}^N {N_{n, part}^t} $ rows and $2N+M$ columns.  Each row corresponds to a task partition, and each column represents a processing location. Rows $\sum\limits_{n = 1}^{i - 1} {N_{n, part}^t + 1} $ to $\sum\limits_{n = 1}^i {N_{n, part}^t}$, where $i\in\{2, 3, \cdots, N\}$, depict the ND of TD $i$'s task partitions when processed at various locations. Columns $1$ to $N$ correspond to TDs, with column $i$ representing TD $i$. As assumed in Section \ref{II-C}, TD $n$'s task partitions cannot be executed on TD $n'$, where $n'\in\mathcal{N}\backslash\{n\}$. Therefore, for TD $n$, all entries in columns $1$ to $N$ (excluding column $n$) of row $n$ are set to $\infty$. Columns $N+1$ to $N+M$ correspond to ADs, with column $N+j$ representing AD $j$. Columns $N+M+1$ to $2N+M$ correspond to the ES $0$, and column $N+M+k$ corresponds to the resources allocated by the ES $0$ to TD $k$. As TD $k$ cannot utilize the resources allocated to TD $k'$ by the ES $0$, where $k'\in\mathcal{N}\backslash\{k\}$, all entries in columns $N+M+1$ to $2N+M$ (excluding column $N+M+k$) of row $k$ are set to  $\infty$.

Based on Lemma 1 and constraint (8i), $\mathcal{P}_3$ can be transformed as follows: minimize the maximum value among the selected positions by choosing one unique column index from each row of matrix  $\mathcal{D}^t$. To solve this problem, we design a fair and delay-minimized task scheduling (FDMTS) algorithm, as presented in Algorithm 1.

\IncMargin{1em}
\begin{algorithm} \SetKwData{Left}{left}\SetKwData{This}{this}\SetKwData{Up}{up} \SetKwFunction{Union}{Union}\SetKwFunction{FindCompress}{FindCompress} \SetKwInOut{Input}{input}\SetKwInOut{Output}{output}
	
	\Input{${\boldsymbol{{\cal H}}^t},\boldsymbol{{\cal F}}_0^t,\boldsymbol{{\cal F}}_{\cal M}^t,{\boldsymbol{{\cal U}}_t}$} 
	\Output{${\boldsymbol{{\cal W}}_t},\emph{max}\; \eta _{n, i}^{t,l}$}
	\BlankLine 

    Initialize $\{ \boldsymbol{{\cal T}}_n^t|n \in {\cal N}\} $ and  $\{ N_{n,part}^t|n \in {\cal N}\} $ based on $\boldsymbol{\mathcal{U}}_t$, construct ${{\bf{D}}^t}$ based on (1)-(6), ${\bf{o}} = []$, and $c = 0$\;
	\For{$i = 1, 2, \cdots, \sum\limits_{n = 1}^N {N_{n,part}^t}$}{ 
        Find the smallest value in row $i$ of ${{\bf{D}}^t}$ and add it to ${\bf{o}}$\;        
    } 
    Select the maximum value from ${\bf{o}}$ and assign it to $c$\;
    Copy ${{\bf{D}}^t}$ to get a copy ${\widehat {\bf{D}}^t}$\;
    Replace all elements of ${\widehat {\bf{D}}^t}$ that are larger than $c$ with $\infty $ to obtain the new matrix ${\overline {\bf{D}} ^t}$\;
    Apply the Hungarian algorithm to ${\overline {\bf{D}} ^t}$ for an efficient solution ${\cal Y} = \{ {m_i}|i \in \{ 1,2,...,\sum\limits_{n\in\mathcal{N}} {N_{n,part}^t} \} \} $, where $m_i$ represents the column index chosen in the $i$th row of ${\overline {\bf{D}} ^t}$\;
    \If{the solution exists}{
        The task scheduling policy $\mathcal{W}_t$ is obtained from $\mathcal{Y}$, and $c$ is the optimal value of $\emph{max}\; \eta _{n, i}^{t,l}$\;
    }
    \Else{ 
            Find the minimum value among the elements in ${{\bf{D}}^t}$ that are larger than $c$, and then update $c$ with that minimum value\;
            \textbf{goto} line 5
        }
    \caption{Task Scheduling Algorithm With Min-Max Fairness Guarantee for PS}\label{algo1} 
    \end{algorithm}
\DecMargin{1em} 

First, we construct a normalized delay matrix based on the current system state and task partitioning action (step 1). Then, we find the minimum values in all rows of this matrix and assign their maximum value to $c$ (steps 2-4). We set all matrix elements greater than $c$ to $\infty$ and apply the Hungarian algorithm to the updated matrix. If a feasible solution exists, the value of $c$ is $\emph{min} \;\;\emph{max}\; \eta _{n, i}^{t,l}$ (steps 5-9). Otherwise, we gradually relax the constraint on the maximum value until a feasible solution is obtained (steps 10-12). The algorithm aims to find the optimal task assignment policy subject to the constraint that the maximum value of all matches is restricted. If a feasible solution cannot be found under the current constraint, we continue to relax the constraint until a feasible solution is found.

\subsection{OTPPS Framework Overview}\label{III-B}

According to Section \ref{III-A}, we can obtain the corresponding parallel scheduling policy for a given task partitioning action. Therefore, it is crucial to determine a near-optimal task partitioning policy in real time. Intuitively, as mentioned in Section \ref{II-F}, we could potentially enumerate all $\prod\limits_{n \in {\cal N}} {\sum\limits_{i = 1}^{{N_{n, res}}} {{{(i)}^{{N_{n, res}}}}} }$ feasible ${\boldsymbol{{\cal U}}_t}$ and select the one that minimizes the objective function $\mathcal{P}_3$. However, such an exhaustive search approach is computationally infeasible, particularly when the problem needs to be frequently resolved with time-varying system states. Additionally, other search-based methods, such as branch-and-bound and Gibbs sampling algorithms \cite{47}, become time-consuming when the values of $N$ and $N_{n, res}$ are large.

Inspired by the work in \cite{48}, we propose a DRL-based framework called OTPPS to address the joint optimization problem considering time-varying channel gains and resource provisioning. We aim to derive a task partitioning policy $\pi$ that can quickly predict the optimal task partitioning action $\boldsymbol{\mathcal{U}}_t^*$ for $\mathcal{P}_1$ once $\boldsymbol{\mathcal{H}}^t$, $\boldsymbol{\mathcal{F}}_0^t$, and $\boldsymbol{\mathcal{F}}_{\mathcal{M}}^t $ are revealed at the beginning of each time frame. Subsequently, we solve the corresponding parallel scheduling problem using Algorithm 1. The task partitioning policy is denoted as
\begin{equation}
\label{eq11}
\pi :\{ {\boldsymbol{{\cal H}}^t},\boldsymbol{{\cal F}}_0^t,\boldsymbol{{\cal F}}_{\cal M}^t\}  \mapsto \boldsymbol{{\cal U}}_t^*.
\end{equation}

The algorithm structure is illustrated in Fig. \ref{framework}. The OTPPS algorithm consists of two alternating stages: task partitioning action and parallel scheduling policy generation, and partitioning policy update. These stages are described as follows.

\begin{figure*}[t]
\centerline{\includegraphics[width=0.8\textwidth]{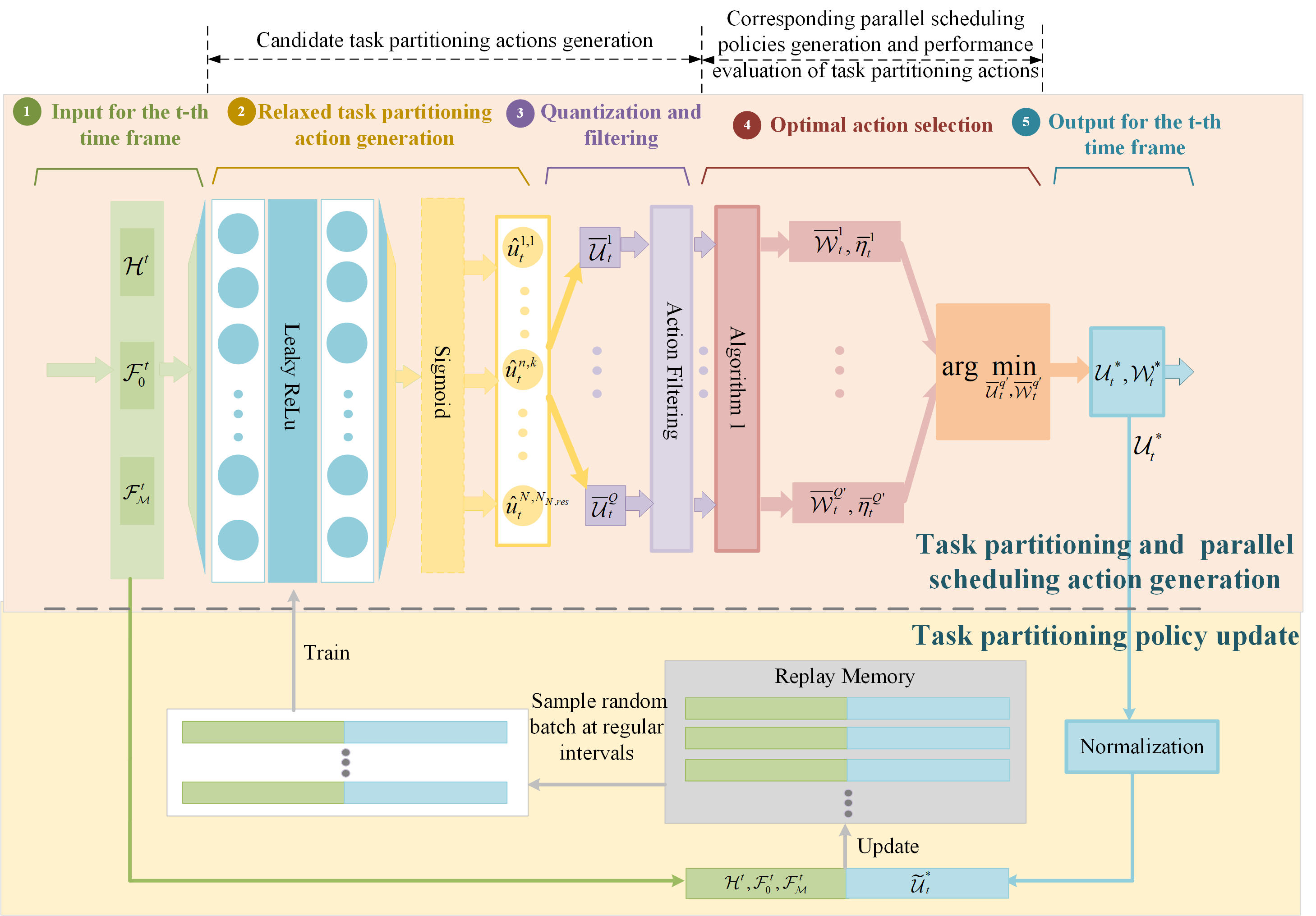}}
\caption{The schematics of the OTPPS framework.}
\label{framework}
\end{figure*}

\begin{enumerate}
\item{\emph{Task Partitioning Action and Parallel Scheduling Policy Generation}: As shown in Fig. \ref{framework}, the action generation phase consists of five steps. The task partitioning action is generated using a DNN, characterized by its embedded parameters $\theta_t$, e.g., the weights that connect the hidden neurons. In the $t$th time frame, the DNN takes $\boldsymbol{\mathcal{H}}^t$, $\boldsymbol{\mathcal{F}}_0^t$, and $\boldsymbol{\mathcal{F}}_{\mathcal{M}}^t$ as inputs and produces a relaxed partitioning action $\hat{\boldsymbol{\mathcal{U}}}_t$, where each entry is relaxed to a continuous value between 0 and 1, based on the current embedded parameters $\theta_t$. The relaxed action is then quantized into $Q$ integer partitioning actions, from which $Q'$ feasible candidate actions are screened out. Among these $Q'$ candidate actions, the objective value of $\mathcal{P}_3$ achievable under each task partitioning action is determined using Algorithm 1. The task partitioning and parallel scheduling actions corresponding to the smallest objective value are selected as the optimal actions $\boldsymbol{\mathcal{U}}_t^*$ and $\boldsymbol{\mathcal{W}}_t^*$, respectively. The intelligent agent then takes $\boldsymbol{\mathcal{U}}_t^*$ and $\boldsymbol{\mathcal{W}}_t^*$ and adds the newly obtained state-action pair $(\{{\boldsymbol{{\cal H}}^t},\boldsymbol{{\cal F}}_0^t,\boldsymbol{{\cal F}}_{\cal M}^t\}, \widetilde {\boldsymbol{\cal U}}_t^*)$ to the experience replay buffer, where $\widetilde {\boldsymbol{\cal U}}_t^*$ represents the normalized value of $\boldsymbol{{\cal U}}_t^*$.}
\item{\emph{Task Partitioning Policy Update}: The task partitioning actions obtained in the action generation stage are used to update the parameters of the DNN through the partitioning policy update stage. As illustrated in Fig. \ref{framework}, we employ an experience replay buffer with limited capacity to store past state-action pairs. In each time frame, the tuple $(\{{\boldsymbol{{\cal H}}^t},\boldsymbol{{\cal F}}_0^t,\boldsymbol{{\cal F}}_{\cal M}^t\},\widetilde {\boldsymbol{\cal U}}_t^*)$ obtained from the action generation stage is added to the buffer as a new training data sample. When the buffer is full, the newly generated data sample replaces the oldest one. During the policy update stage, a batch of training samples is randomly selected from the buffer to train the DNN, resulting in the update of its parameters from $\theta_t$ to $\theta_{t+1}$. The updated DNN is then utilized in the subsequent time frame to generate the task partitioning action $\boldsymbol{\mathcal{U}}_{t+1}^*$ and the corresponding parallel scheduling policy $\boldsymbol{\mathcal{W}}_{t+1}^*$ based on the new system status ${\rm{\{ }}{\boldsymbol{{\cal H}}^{t+1}},\boldsymbol{{\cal F}}_0^{t+1},\boldsymbol{{\cal F}}_{\cal M}^{t+1}{\rm{\} }}$.}

\end{enumerate}

Such iterations are repeated as new time frames arrive, gradually improving the policy $\pi_{\theta_t}$ of the DNN. The detailed descriptions of the two stages are provided in the following section. It is important to note that these two stages operate on different time scales. The action generation stage is executed at the beginning of each time frame, while the policy update stage is performed every $\Omega$ time frames. Additionally, we initiate the training process when the number of samples in the buffer exceeds half of its capacity, ensuring a sufficient number of new data samples for effective training.

\section{The OTPPS Framework}

\subsection{Action Generation}\label{IV-A}
Let us consider the system status realization $\{{\boldsymbol{{\cal H}}^t},\boldsymbol{{\cal F}}_0^t,\boldsymbol{{\cal F}}_{\cal M}^t\}$ observed in the $t$th time frame, where $t=1, 2, \cdots$. The parameters of the DNN $\theta_t$ are randomly initialized following a zero-mean normal distribution when $t=1$. In this study, we employ the Rectified Linear Unit (ReLU) as the activation function for the hidden layers and the Sigmoid function for the output layer. Consequently, the DNN generates a relaxed partitioning action denoted as $\hat{\boldsymbol{\mathcal{U}}}_t$, represented by a parameterized function $\hat{\boldsymbol{\mathcal{U}}}_t=f_{\theta_t}(\boldsymbol{\mathcal{H}}^t,\boldsymbol{\mathcal{F}}_0^t, \boldsymbol{\mathcal{F}}_{\mathcal{M}}^t)$, where
\begin{equation}
\label{eq12}
{\widehat {\boldsymbol{\cal U}}_t} = \{ \hat u_t^{n,k}|\hat u_t^{n,k} \in [0,1],n \in {\cal N},k \in \{ 1,\cdots,{N_{n,res}}\} \}
\end{equation}
and $\hat u_t^{n,k}$ corresponds to the $k$th result of TD $n$.

To generate an integer partitioning action that satisfies condition (8c), we quantize $\hat{\boldsymbol{\mathcal{U}}}_t$ into $Q$ candidate partitioning actions. Among these actions, we screen out the ones with the number of task partitions not exceeding $2N+M$ based on Theorem 3, resulting in $Q'$ remaining candidate actions. The performance of each candidate action is evaluated using Algorithm 1. The candidate action with the lowest maximum normalized delay (MND) is selected as the final task partitioning action, along with its corresponding parallel scheduling policy. It is worth noting that an effective quantization method allows for the generation of only a few candidate actions, thereby reducing the computational complexity. Additionally, the quantized actions derived from the relaxed action should exhibit sufficient diversity to achieve a lower MND.

\begin{theorem}
The optimal task partitioning action ensures that the number of task partitions does not exceed $2N+M$.
\end{theorem}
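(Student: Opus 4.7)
The plan is to prove the bound by a pigeonhole-style counting argument on the available processing locations, leveraging Lemma 1 and constraint (8i).

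First I would invoke Lemma 1 to establish that, under the optimal task partitioning and parallel scheduling policy, the task partitions of any single TD $n$ must be assigned to pairwise distinct processing locations. This immediately converts the problem into counting how many distinct processing locations can simultaneously be used across all TDs in a single time frame.

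Next I would enumerate the processing locations, mirroring the column structure of the matrix $\mathbf{D}^t$ in Fig.~\ref{matrix}. Specifically, the admissible processing locations partition into three disjoint groups: (i) the $N$ local slots, one per TD, where by assumption $\mathcal{T}_{n,i}^t$ can only run on TD $n$ itself; (ii) the $M$ auxiliary device slots, where constraint (8i) dictates that each AD $m\in\mathcal{M}$ can be selected by at most one task partition across the entire system within a time frame; and (iii) the $N$ ES slots, one per TD, since ES 0 reserves a dedicated resource share $f_{0,n}^t$ for each TD $n$ and TD $k$ cannot occupy TD $k'$'s share. Summing the maximum occupancy of each group yields $N+M+N = 2N+M$ distinct assignable slots.

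Then I would conclude: by Lemma 1 each task partition is mapped to a distinct slot (distinct within any single TD's partitions), and by constraint (8i) plus the dedicated nature of local and ES slots, no slot can be simultaneously used by two different task partitions system-wide. Hence the total number of task partitions $\sum_{n\in\mathcal{N}} N_{n,part}^t$ is bounded above by the total number of usable slots, which is $2N+M$. If any task partitioning action violated this bound, then by pigeonhole two partitions would be forced to share a slot, contradicting either Lemma 1 or (8i), so such an action cannot be optimal.

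The main obstacle I anticipate is being precise about what Lemma 1 actually rules out versus what (8i) rules out: Lemma 1 forbids two partitions of the same TD from sharing a location (otherwise one could merge them without increasing delay), whereas (8i) forbids two partitions of possibly different TDs from sharing an AD. The local and ES slots require a separate argument based on the system model (each TD's local device is private, and ES 0's per-TD allocation $f_{0,n}^t$ is non-transferable), which is not captured by (8i) directly. Once these three exclusivity sources are cleanly separated, the counting becomes immediate and the bound $2N+M$ follows.
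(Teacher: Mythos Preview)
Your proposal is correct and follows essentially the same route as the paper: both argue by pigeonhole that if the total number of task partitions exceeds the $2N+M$ columns of $\mathbf{D}^t$, two partitions must collide at a single processing slot, contradicting optimality via Lemma~1. Your version is in fact more careful than the paper's, since you explicitly separate which exclusivity source (Lemma~1, constraint~(8i), or the non-transferability of local/ES slots) handles each column type, whereas the paper attributes the contradiction to Lemma~1 alone.
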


\begin{proof}
Assuming that the number of task partitions generated by the optimal task partitioning action exceeds $2N+M$. Referring to the situation depicted in Fig. \ref{matrix}, the number of rows in the normalized delay matrix would be greater than the number of columns. Consequently, during the parallel scheduling process, there would be at least two task partitions executed in the same location, which contradicts Lemma 1. Hence, the assumption is proven false, and Theorem 3 is established.
\end{proof}

\IncMargin{1em}
\begin{algorithm} \SetKwData{Left}{left}\SetKwData{This}{this}\SetKwData{Up}{up} \SetKwFunction{Union}{Union}\SetKwFunction{FindCompress}{FindCompress} \SetKwInOut{Input}{input}\SetKwInOut{Output}{output}
	
	\Input{$\hat{\boldsymbol{\mathcal{U}}}_t, Q$} 
	\Output{The set of Q quantized task partitioning actions $\{ \bar {\boldsymbol{\cal U}}_t^1,\bar {\boldsymbol{\cal U}}_t^2, \cdots,\bar {\boldsymbol{\cal U}}_t^Q\}$}
	\BlankLine 

    Initialize $\{ \bar {\boldsymbol{\cal U}}_t^1,\bar {\boldsymbol{\cal U}}_t^2, \cdots ,\bar {\boldsymbol{\cal U}}_t^Q\}$ with 0, where $\bar u_t^{q,n,k} = 0$ represents that the $k$th result of task $\mathcal{T}_n$ in the $q$th quantized action will be computed by the virtual partition (partition 0)\; 
	\For{$q=1, 2, \cdots, Q$}{
        \For{$n= 1, 2, \cdots, N$}{
            Create $N_{n,res}$ empty vectors, denoted as ${\bf{r}}_1$, ${\bf{r}}_2$, $\cdots$, ${\bf{r}}_{N_{n,res}}$. Also create a set ${\cal I}_n = \{1, 2, \cdots, N_{n,res}\}$ to represent the set of indices for results of ${\cal T}_n$\;
            \For{$j = 1, 2, \cdots, {N_{n,res}} - 1$}{
                \For{$k = 1, 2, \cdots, {N_{n,res}}$}{
                    \If{$\frac{{j - 1}}{{{N_{n,res}}}} + \frac{{q - 1}}{{Q{N_{n,res}}}} \le \hat u_t^{n,k} < \frac{j}{{{N_{n,res}}}} + \frac{{q - 1}}{{Q{N_{n,res}}}}$}{
                        Add $k$ to $\textbf{r}_j$\;
                        Move $k$ out of $\mathcal{I}_n$\;  
                    }
                }
            }
            Add the remaining elements in $\mathcal{I}_n$ to $\textbf{r}_{N_{n,res}}$\;
            \For{$j = 1, 2, \cdots, {N_{n,res}}$}{
                \If{$\textbf{r}_j$ is not empty}{
                    \For{$k \in {{\bf{r}}_j}$}{
                        $\bar u_t^{q,n,k} \leftarrow \emph{min}(\textbf{r}_j)$
                    }
                }
            }
            Sort all results of $\mathcal{T}_n$ in ascending order by their corresponding values $\bar u_t^{q,n,i}$, and then renumber them according to the ranking\;
        }
                
    } 
    \caption{Sliding Threshold Quantization (STQ) Algorithm}\label{algo2} 
    \end{algorithm}  
\DecMargin{1em} 

The order-preserving quantization method was originally introduced in \cite{48} to explore the output of the DNN. Its core principle is to preserve the ordering of vector entries before and after quantization. In \cite{49}, the GNOP method was proposed, which combines Gaussian noise-added approach and the order-preserving quantization method. This technique introduces noise to better explore the action space and increase the likelihood of finding a local optimum within a large action space. However, both methods are specifically designed for binary offloading actions and are not suitable for addressing the problem under investigation. Here, we devise the sliding threshold quantization (STQ) algorithm, as depicted in Algorithm 2, to achieve efficient and low-complexity action quantization. It ensures highly diverse quantized actions.

The algorithm consists of two main parts: result classification and task partition numbering. Initially, each component of the $Q$ quantized partitioning actions is assigned a value of 0 (step1). Subsequently, we iterate $Q$ times to reassign values to each component of the $Q$ quantized actions. During each iteration, we determine the quantization threshold for each task, enabling the classification of all the results associated with each task (Steps 2-10). Finally, we assign numerical labels to the results of each task based on their respective categories (Steps 11-15). 

Fig. \ref{STQ} provides an example to illustrate the sliding threshold quantization method. In this particular scenario, there are two tasks, denoted as $\mathcal{T}_1$ and $\mathcal{T}_2$. $\mathcal{T}_1$ requires to output four results, and $\mathcal{T}_2$ requires to output three. Since the number of task partitions cannot exceed the number of output results, we initially set the thresholds for these tasks as $\{\frac{1}{4}, \frac{2}{4}, \frac{3}{4}, 1\}$ and $\{\frac{1}{3}, \frac{2}{3}, 1\}$, respectively. Assuming that $Q$ is set to 4, we can calculate the sliding threshold steps for both tasks, which are $\frac{1}{4\times 4}$ and $\frac{1}{4\times 3}$. When the DNN outputs ${\widehat {\boldsymbol{\cal U}}_t} = \{ 0.2,0.4,0.7,0.9,0.3,0.7,0.9\} $, applying the STQ algorithm generates four quantized actions: $\bar {\boldsymbol{\cal U}}_t^1 = \{ 1,2,3,4,1,2,2\}$, $\bar {\boldsymbol{\cal U}}_t^2 = \{ 1,2,3,4,1,2,3\}$, $\bar {\boldsymbol{\cal U}}_t^3 = \{ 1,2,3,4,1,2,3\}$, and $\bar {\boldsymbol{\cal U}}_t^4 = \{ 1,1,2,2,1,2,2\}$, respectively. It can be seen that $\bar {\boldsymbol{\cal U}}_t^2 $ and $\bar {\boldsymbol{\cal U}}_t^3$ are identical. Consequently, the subsequent filtering operation will remove infeasible or duplicate actions, reducing the computational effort for action evaluation.

\begin{figure}[t]
\centerline{\includegraphics[width=0.5\textwidth]{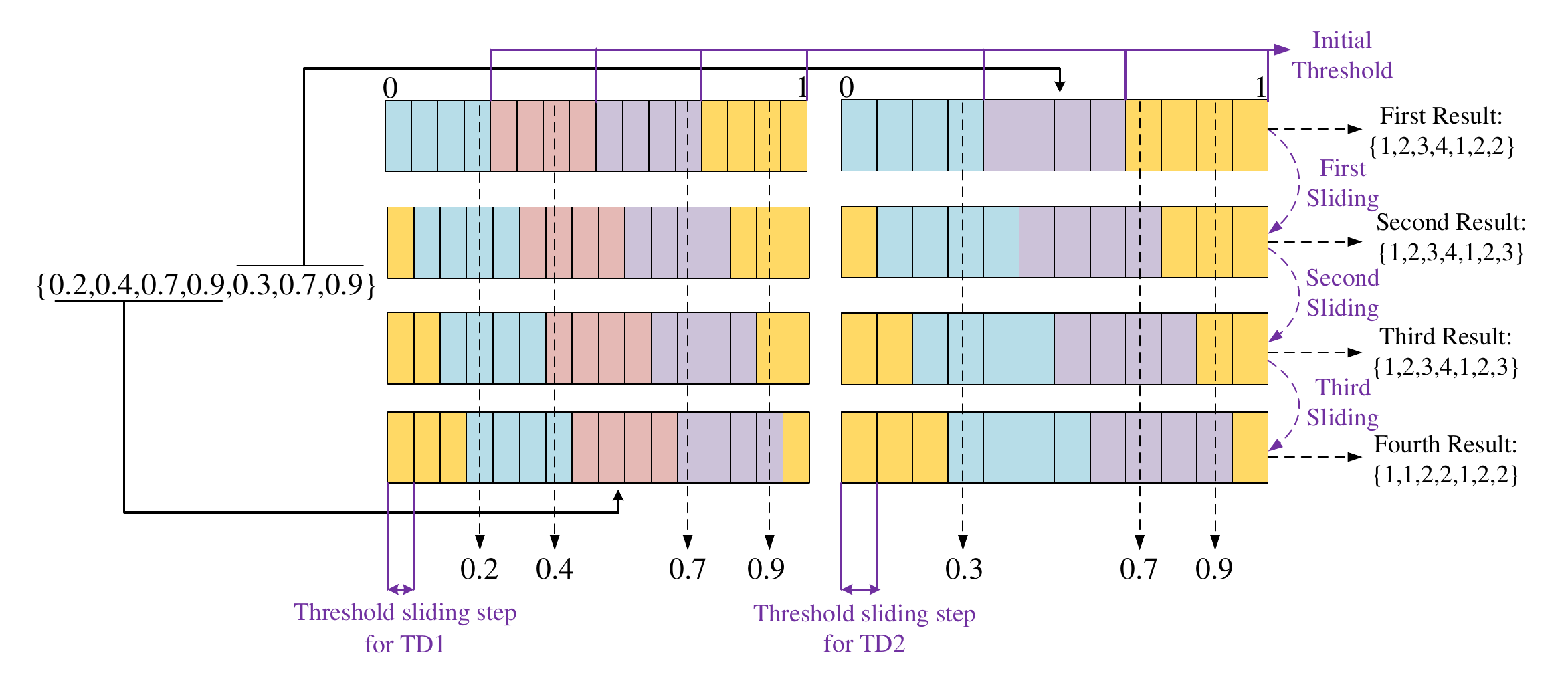}}
\caption{An illustration of the execution process of STQ algorithm.}
\label{STQ}
\end{figure}

K-Means is an alternative quantization method that involves selecting a value, denoted as $K$, to determine the number of task partitions. The results are then partitioned into $K$ categories to complete the quantization process. Unlike STQ, this method generates a set of candidate actions where the only difference lies in the number of task partitions. However, for the same number of task partitions, there are typically various partitioning actions, as depicted in ways $S2$, $S3$, and $S4$ of Fig. \ref{task}(b). Consequently, STQ exhibits a higher diversity in the set of candidate actions, leading to an increased likelihood of finding a local optimum.  In Section \ref{V-B}, it will be demonstrated that the proposed STQ outperforms the K-Means method.

Recall that each feasible candidate action $\overline{ \boldsymbol{\cal U}} _t^{q'}$ can achieve $\overline \eta  _t^{q'}({{\boldsymbol{\cal H}}^t},{\boldsymbol{\cal F}}_0^t,{\boldsymbol{\cal F}}_{\cal M}^t,\overline {\boldsymbol{\cal U}} _t^{q'})$ by solving $\mathcal{P}_3$. Thus, the best task partitioning action at the $t$th time frame is selected as follows:
\begin{equation}
\label{eq13}
{\boldsymbol{\cal U}}_t^* = \arg \;\mathop {\min }\limits_{\overline {\boldsymbol{\cal U}} _t^i \in \{ \overline {\boldsymbol{\cal U}} _t^{q'}\} } \overline \eta  _t^{q'}({\boldsymbol{{\cal H}}^t},\boldsymbol{{\cal F}}_0^t,\boldsymbol{{\cal F}}_{\cal M}^t,\overline {\boldsymbol{\cal U}} _t^{q'}).
\end{equation}
Note that the evaluation of $\overline \eta  _t^{q'}({\boldsymbol{{\cal H}}^t},\boldsymbol{{\cal F}}_0^t,\boldsymbol{{\cal F}}_{\cal M}^t,\overline {\boldsymbol{\cal U}} _t^{q'})$ for $Q'$ times can be processed in parallel to accelerate the computation of Equation (\ref{eq13}). Subsequently, the intelligent agent outputs the task partitioning action ${\boldsymbol{\cal U}}_t^*$ along with its corresponding optimal parallel scheduling policy ${\boldsymbol{\cal W}}_t^*$.

\subsection{Task Partitioning Policy Update}\label{IV-B}
The task partitioning action obtained in Equation (\ref{eq13}) is used to update the task partitioning policy of the DNN. Specifically, we maintain an initially empty experience replay buffer with limited capacity. At the $t$th time frame, the task partitioning action $\boldsymbol{\mathcal{U}}_t^*$ obtained during the action generation phase is normalized to scale the value of each entry between 0 and 1, resulting in $\widetilde {\boldsymbol{\cal U}}_t^*$. Then, a new training data sample $(\{ {{\boldsymbol{\cal H}}^t},{\boldsymbol{\cal F}}_0^t,\boldsymbol{{\cal F}}_{\cal M}^t\} ,\widetilde {\boldsymbol{\cal U}}_t^*)$ is added to the buffer. When the buffer reaches its full capacity, the newly generated data sample replaces the oldest one.

We continue to use the example shown in Fig. \ref{STQ} to illustrate the normalization method. Since $\mathcal{T}_1$ produces four results and $\mathcal{T}_2$ produces three results, the maximum values of each component in the partitioning action for the two tasks are 4 and 3, respectively. Therefore, considering $\bar{\boldsymbol{\mathcal{U}}}_t^1 = \{ 1,2,3,4,1,2,2\} $ as an example, it is normalized to obtain $\widetilde {\boldsymbol{\cal U}}_t^1 = \{ \frac{1}{4},\frac{1}{2},\frac{3}{4},1,\frac{1}{3},\frac{2}{3},\frac{2}{3}\}$.

The data samples stored in the buffer are used to train the DNN. Specifically, in the $t$th time frame, we randomly select a batch of training data samples $\{ (\{{{\boldsymbol{\cal H}}^i},{\boldsymbol{\cal F}}_0^i,{\boldsymbol{\cal F}}_{\cal M}^i\},\widetilde {\boldsymbol{\cal U}}_i^*)| i \in {{\cal I}_t}\}$ from the buffer, where ${{\cal I}_t} $ represents the set of chosen time indices. Then, the parameters $\theta_t$ of the DNN are updated by applying the Adam algorithm to reduce the mean square error loss, as
\begin{equation}
\label{eq14}
Loss({\theta _t}) = \frac{1}{{|{{\cal I}_t}|}}\sum\limits_{i \in {{\cal I}_t}} {||{f_{{\theta _t}}}(} {{\boldsymbol{\cal H}}^i},{\boldsymbol{\cal F}}_0^i,{\boldsymbol{\cal F}}_{\cal M}^i) - \widetilde {\boldsymbol{\cal U}}_i^*)||_2^2,
\end{equation}
where $|\mathcal{I}_t|$ denotes the size of $\mathcal{I}_t $, and $||{\bf{a}}||_2^2$ represents the summation operation over the square of each element in the vector $\textbf{a}$. For brevity, the detail of the Adam algorithm is omitted here. In practice, we initiate the training process when the number of samples exceeds half of the buffer size, and the DNN is trained every $\Omega$ frames to accumulate a sufficient number of new data samples in the buffer.

\IncMargin{1em}
\begin{algorithm}
\SetKwData{Left}{left}\SetKwData{This}{this}\SetKwData{Up}{up} \SetKwFunction{Union}{Union}\SetKwFunction{FindCompress}{FindCompress} \SetKwInOut{Input}{input}\SetKwInOut{Output}{output}
	
	\Input{${{\boldsymbol{\cal H}}^t},{\boldsymbol{\cal F}}_0^t,{\boldsymbol{\cal F}}_{\cal M}^t$} 
	\Output{${\boldsymbol{\cal U}}_t^*,{\boldsymbol{\cal W}}_t^*$}
	\BlankLine 

        Initialize DNN with random parameters $\theta_1$ following a zero-mean normal distribution and empty buffer. Set iteration number $M$, the buffer capacity $L$, and the training interval $\Omega$\;
	\For{$t = 1, 2, \cdots, I_{max}$}{ 
            Input ${{\boldsymbol{\cal H}}^t},{\boldsymbol{\cal F}}_0^t,{\boldsymbol{\cal F}}_{\cal M}^t$ into the DNN to generate a relaxed task partitioning action ${\widehat {\boldsymbol{\cal U}}_t} = {f_{{\theta _t}}}({{\boldsymbol{\cal H}}^t},{\boldsymbol{\cal F}}_0^t,{\boldsymbol{\cal F}}_{\cal M}^t)$\;
            Quantize ${\widehat {\boldsymbol{\cal U}}_t}$ into Q integer task partitioning actions $\{ \bar{\boldsymbol{\mathcal{U}}} _t^q\} $ according to \textbf{Algorithm 2}\;
            Filter out repetitive actions and infeasible actions, leaving $Q'$ feasible candidates $\{ \bar{\boldsymbol{\cal U}}_t^{q'}\} $\;
            The $Q'$ candidate actions are evaluated in parallel using \textbf{Algorithm 1} to obtain the maximum value of the normalized delay achievable for each task partitioning action and the corresponding parallel scheduling policy $\overline {\boldsymbol{\cal W}} _t^{q'}$\;
            Select the optimal task partitioning action according to (\ref{eq13}) and output it and the corresponding parallel scheduling strategy\;
            Normalize each element of ${\boldsymbol{\cal U}}_t^*$ to get $\widetilde {\boldsymbol{\cal U}}_t^*$\;
            Update the replay buffer by adding  $(\{{{\boldsymbol{\cal H}}^t},{\boldsymbol{\cal F}}_0^t,\boldsymbol{{\cal F}}_{\cal M}^t\},\widetilde {\boldsymbol{\cal U}}_t^*\})$ and replacing the oldest data sample if the buffer is full\;
            \If{$t>L/2$ and $t$ mod $\Omega$=0}{
                Randomly sample a batch of data set $\{ (\{{{\boldsymbol{\cal H}}^i},{\boldsymbol{\cal F}}_0^i,{\boldsymbol{\cal F}}_{\cal M}^i\},\widetilde {\boldsymbol{\cal U}}_i^*)|i \in {{\cal I}_t}\}$ from the buffer\;
                Train the DNN with $\{ (\{{{\boldsymbol{\cal H}}^i},\boldsymbol{{\cal F}}_0^i,{\boldsymbol{\cal F}}_{\cal M}^i\},\widetilde {\boldsymbol{\cal U}}_i^*)|i \in {{\cal I}_t}\}$ and update $\theta_t$ using the Adam algorithm\;
            }
    } 
    
    \caption{Online Task Partition Action and Parallel Scheduling Policy Generation (OTPPS) Algorithm}\label{algo3} 
    \end{algorithm}

\DecMargin{1em} 

Overall, the DNN iteratively learns from the state-action pairs $\{ (\{{{\boldsymbol{\cal H}}^i},{\boldsymbol{\cal F}}_0^i,{\boldsymbol{\cal F}}_{\cal M}^i\},\widetilde {\boldsymbol{\cal U}}_i^*)| i \in {{\cal I}_t}\}$ and generates improved task partitioning actions as time progresses. Moreover, due to the finite buffer size constraint, the DNN exclusively learns from the most recent data samples generated by the most recent (and more refined) task partitioning policies. This closed-loop reinforcement learning mechanism continually enhances its task partitioning policy until convergence. The pseudo-code for the OTPPS algorithm is provided in Algorithm 3.

\subsection{Computational Complexity and Convergence Analysis}\label{IV-C}
This subsection analyzes the computational complexity and convergence of the proposed algorithm.

First, based on the above algorithms, Theorem 4 analyzes the time complexity of the OTPPS algorithm as follows.

\begin{theorem}
The time complexity of the OTPPS algorithm is $ \mathcal{O}(ZN^2M^2)$, where $Z$ denotes the total number of IoT devices.
\end{theorem}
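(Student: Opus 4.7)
The plan is to decompose Algorithm 3 (OTPPS) into its constituent subroutines executed per time frame, bound each in terms of $N$, $M$, and $Z=N+M$, and then identify the dominant term. The components are: (i) one DNN forward pass; (ii) the STQ procedure (Algorithm 2) producing $Q$ quantized actions; (iii) the infeasibility/duplicate filter that retains $Q'\le Q$ candidates; (iv) evaluating each candidate with FDMTS (Algorithm 1); (v) selecting $\boldsymbol{\mathcal{U}}_t^*$ via (\ref{eq13}) and updating the buffer; and (vi) every $\Omega$ frames, one Adam update on a batch of samples. Treating $Q$ and the DNN architecture as fixed design constants, items (i), (v), and (vi) contribute only $O(NM)$ per frame (linear in the input size $\{\boldsymbol{\mathcal{H}}^t,\boldsymbol{\mathcal{F}}_0^t,\boldsymbol{\mathcal{F}}_{\mathcal{M}}^t\}$), and item (iii) is $O(Q\cdot N\cdot N_{n,\text{res}})$. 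Thus the dominant cost comes from (iv), namely $Q'$ invocations of FDMTS.

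Next I would analyse FDMTS in detail. By Theorem 3, the filter retains only actions with $\sum_n N_{n,\text{part}}^t\le 2N+M=O(Z)$, so the normalized delay matrix $\mathbf{D}^t$ has dimension at most $O(Z)\times O(Z)$. Construction of $\mathbf{D}^t$ via (\ref{eq1})--(\ref{eq6}) and the row-minimum scan (steps 1--4) cost $O(Z^2)$. Each invocation of the Hungarian algorithm on $\overline{\mathbf{D}}^t$ (step 7) costs $O(Z^3)$ in the worst case. The outer relaxation loop (steps 5--12) increments the threshold $c$ to the next distinct value in $\mathbf{D}^t$ larger than the current $c$ whenever no feasible assignment is found; using the min-max bottleneck-assignment structure, one can bound the number of such relaxations by the number of entries that act as ``bottleneck candidates,'' which is at most the number of distinct matrix entries below the optimal bottleneck value. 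Combining, FDMTS runs in $O(Z\cdot N^2M^2)$ per candidate when this bottleneck count is carefully bounded; with $Q'=O(1)$ candidates per frame, the per-frame complexity is $O(ZN^2M^2)$, yielding the claim.

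The main obstacle will be the bound on the outer loop of FDMTS: a naive argument multiplies ``up to $O(Z^2)$ distinct matrix values'' by an $O(Z^3)$ Hungarian call and overshoots the stated bound. To obtain $O(ZN^2M^2)$ one must exploit the bipartite structure more carefully: only entries corresponding to \emph{feasible} (TD-partition, resource) pairs can become the bottleneck, and the non-$\infty$ entries in $\mathbf{D}^t$ are concentrated in (a) the $N$ diagonal TD-self columns, (b) the $M$ AD columns shared across all rows, and (c) the $N$ ES columns which are also diagonal-restricted. A counting of feasible bottleneck candidates then scales as $O(NM)$ per partition row, i.e.\ $O(N^2M)$ in total, producing after multiplication with the Hungarian cost a tighter $O(ZN^2M^2)$ bound rather than $O(Z^5)$. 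I would verify this carefully, then conclude that the per-frame runtime is dominated by FDMTS and equals $O(ZN^2M^2)$, which is the statement of Theorem 4.
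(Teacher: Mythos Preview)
Your high-level decomposition matches the paper: the dominant cost is the FDMTS evaluation of the $Q'=O(1)$ candidate partitioning actions, and the other items (DNN forward pass, STQ, filtering, selection, Adam update) are lower order. Where your proposal diverges---and where the arithmetic breaks---is inside the FDMTS analysis.

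The paper does \emph{not} cost the Hungarian step as a dense $O(Z^3)$ assignment. In Section~\ref{II-G} and in the proof of Theorem~4 it treats Step~7 as an augmenting-path \emph{maximum bipartite matching} on the graph induced by the finite entries of $\mathbf{D}^t$, with complexity $O(VE)$. Here $V=\sum_n N_{n,\mathrm{part}}^t+2N+M=O(Z)$, and, crucially, each row of $\mathbf{D}^t$ has only $M+2$ finite entries (its own TD column, its own ES column, and the $M$ AD columns), so $E=\sum_n N_{n,\mathrm{part}}^t(M+2)=O(NM)$ when $\sum_n N_{n,\mathrm{part}}^t=O(N)$. The outer relaxation loop (Steps~5--12) is bounded directly by $E$: each time $c$ is increased it moves to the next finite matrix entry, of which there are at most $E$. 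The total is therefore $O(V\cdot E\cdot E)=O(Z\cdot(NM)^2)=O(ZN^2M^2)$.

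Your route via $O(Z^3)$ Hungarian plus a separate ``bottleneck-candidate'' count does not close. First, the per-row count of finite entries is $O(M)$, not $O(NM)$; with $O(N)$ rows this gives $O(NM)$ candidates, not $O(N^2M)$. Second, even accepting your own numbers, $O(N^2M)\times O(Z^3)$ is not $O(ZN^2M^2)$; you would need $Z^2=O(M)$, which is false. The missing idea is to drop the dense $O(Z^3)$ bound and instead use the sparse $O(VE)$ matching complexity together with the edge count $E=O(NM)$, which simultaneously bounds both the per-call Hungarian work and the number of outer-loop relaxations. With that substitution your proof goes through and coincides with the paper's.
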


\begin{proof}
The environmental state, parameterized in nature, is initially fed into the DNN to generate a relaxed task partitioning action, which requires $\mathcal{O}(1)$. Subsequently, Alg.2 is invoked to produce a set of $Q$ quantized task partitioning actions. Following this, each task partitioning action undergoes an evaluation to ascertain its adherence to Theorem 3, thus sieving out a subset of $Q'$ candidate task splitting actions. This validation step requires $\mathcal{O}(\mathcal{Q})$. These candidate task partitioning actions are concurrently submitted to Alg.3 for assessment, deriving the corresponding parallel scheduling strategy and the minimum attainable maximum normalized delay. Ultimately, the outputted task partitioning alongside its corresponding parallel scheduling policy is determined based on Equation (13), which needs at most $\mathcal{O}(\mathcal{Q})$.

The computational complexity of Alg.2 can be easily analyzed as $ \mathcal{O}(Q\sum_{n=1}^N(2N_{n, res}-1))$.

In Alg.3, the Hungarian algorithm is used in the inner layer. According to \cite{50}, its computational complexity is $\mathcal{O}(VE) $, where $V$ signifies the count of vertices and $E$ symbolizes the count of edges. In this study, referring to Fig. 6, it's evident that $ V=\sum_{n=1}^NN_{n, part}^t+M+2N$ and $ E=2\sum_{n=1}^NN_{n, part}^t+\sum_{n=1}^NN_{n, part}^tM=\sum_{n=1}^NN_{n, part}^t(2+M)$. Thus, the computational complexity for every iteration in Alg. 2 materializes as $ \mathcal{O}((\sum_{n=1}^NN_{n, part}^t+M+2N)\cdot\sum_{n=1}^NN_{n, part}^t(2+M))$. Additionally, the utmost number of iterations shouldn't surpass $E$. Consequently, the time complexity of Alg.3 is $ \mathcal{O}((\sum_{n=1}^NN_{n,part}^t+2N+M)\cdot(\sum_{n=1}^NN_{n,part}^t(2+M))^2)$.

Summing up the above analyses, the computational complexity of the OTPPS algorithm can be expressed as $ \mathcal{O}(1)+\mathcal{O}(Q\sum_{n=1}^{N}(2N_{n, res}-1))+\mathcal{O}(Q)+ \mathcal{O}((\sum_{n=1}^NN_{n, pant}^t+2N+M)\cdotp(\sum_{n=1}^NN_{n, pant}^t(2+M))^2)+\mathcal{O}(Q) $. Upon simplification, the computational complexity of the OTPPS algorithm can be approximated as $ \mathcal{O}_{OTTPS}\approx\mathcal{O}(ZN^2M^2)$.
\end{proof}

The convergence of the OTPPS algorithm is analyzed next.

As shown in Fig.\ref{framework}, the OTPPS framework operates on the foundation of an actor-critic learning structure. Within this framework, the critic module accomplishes a rapid and precise assessment of the task partitioning action's performance by employing Alg.2. Therefore, there is no convergence problem with the critic module of OTPPS. On the other hand, the actor network in this framework remains consistent with its counterpart in traditional actor-critic DRL, directly producing actions as a response to the input environmental state. The convergence of the actor-critic DRL has been demonstrated in \cite{51}, underscoring the convergent nature of the OTPPS algorithm proposed in this study. Additionally, the OTPPS algorithm introduces an additional quantization and filtering procedure to the actor network's output. This augmentation generates a limited set of candidate task partitioning actions, strategically amplifying the likelihood of discovering local optima and thereby expediting the convergence of the actor network.

\section{Performance Evaluation}
In this section, we conduct extensive simulations to evaluate the performance of the OTPPS framework. First, the simulation settings are given. Then the experimental results are presented to demonstrate the effectiveness of the OTPPS framework in minimizing the average delay for all users and ensuring fairness among users in the device-assisted MEN.

\begin{table}[t]
\centering
\caption{Main Simulation Parameters}
\label{param}
\resizebox{0.8\columnwidth}{!}{%
\begin{tabular}{@{}l|l@{}}
\toprule
\textbf{Parameters}\qquad \qquad \qquad \qquad& \textbf{Value}            \\ \midrule
$f_n$               & 0.4-0.6GHz uniformly      \\
$f_m^t$             & 0.2-1.6GHz uniformly      \\
$f_{0,n}^t$         & 1.0-2.0GHz uniformly      \\
$p_n$               & 0.1W                      \\
$p_0$               & 1.5W                      \\
$W$                 & 5MHz                      \\
$N_0$               & $10^{-10}$W                      \\
$Z_n$               & 300-500KB uniformly       \\
$D_n$               & 30-80Megacycles uniformly \\
$I_{max}$           & 10000                     \\
$\Omega$            & 10                        \\
$|\mathcal{I}_t|$   & 128                       \\ \bottomrule
\end{tabular}%
}
\end{table}

\subsection{Simulation Setting}\label{V-A}
Similar to \cite{30}, we consider a square region with sides of 200 meters that comprises multiple task IoT devices, multiple auxiliary IoT devices with idle computation resources, and an SBS equipped with an ES. We assume that the average channel gain $\bar{h}_{i}$ follows a path-loss model $\bar{h}_{i}=A_{d}\left(\frac{3\times10^{8}}{4\pi f_{c}d_{i}}\right)^{d_{e}}$,  where $A_{d}=4.11$ denotes the antenna gain \cite{48}, $ f_{c}=915$ MHz denotes the carrier frequency, $ d_{e}=3$ denotes the path loss exponent \cite{40}, and $d_i$ in meters denotes the distance between the $i$th IoT device and the ES. $h_i$ denotes the channel gain between the ith IoT device and the BS, which follows an i.i.d. Rician distribution with line-of-sight link gain equal to $0.3\bar{h}_i$ \cite{40}. Furthermore, we consider that the uplink and downlink experience the same fading on the assumption of channel reciprocity \cite{48}. We also refer to \cite{19} for some experiment parameters. The key simulation parameters used in this study are listed in Table \ref{param}. All the simulations are performed on a Pytorch 1.10.2 platform with an Intel Core i7-13650HX 4.9GHz CPU and 16 GB of memory.

Here, we set the $T^{max}_n$ for all TDs to be one unit of time, ensuring that the normalized delay of each TD is equal to the task completion time. Therefore, for convenience, we directly use the user's task completion time as our optimization objective in the following.

For task partitioning, we establish the following baseline algorithms to assess the performance of the OTPPS algorithm. Furthermore, given that the proposed framework represents an enhanced DRL algorithm employing an actor-critic structure, we additionally choose DQN \cite{53} and conventional actor-critic \cite{30} as comparative algorithms. It’s worth noting that DQN is a value-based DRL method, while the actor-critic algorithm is a policy-based DRL method.

\begin{enumerate}
\item{\emph{NOSP}: All users' tasks are not split, and each task is scheduled as a whole.}
\item{\emph{MINGRA}: Minimum granularity partitioning, where each task is divided based on the maximum number of task partitions.}
\item{\emph{K-Means} \cite{55}}: The relaxed task partitioning actions output by the DNN are quantized using the K-Means clustering method.
\item{\emph{Exhaustive}: All possible partitioning actions are iterated, and the optimal one is selected.}
\end{enumerate}

It is important to note that the task partitions generated by the aforementioned baseline algorithms are scheduled using the FDMTS algorithm (Algorithm 1) proposed in this paper.

For task scheduling, similar to \cite{30}, we employ four benchmark algorithms:
\begin{enumerate}
\item{\emph{Local}: All tasks of all users are executed on ES $0$.}
\item{\emph{Greedy} \cite{54}: Each task partition is greedily selected to be executed on the ES $0$, TD $n$, or AD $m$ based on its estimated finish time.}
\item{\emph{Random}: The scheduling strategy for each task partition is selected randomly.}
\item{\emph{Kuhn-Munkres} \cite{16}}: Task partitions are matched to processing locations using the Kuhn-Munkres algorithm in graph theory.
\end{enumerate}

It is important to note that the task partitions used for scheduling in the above baseline algorithms are generated using the DRL-based approach proposed in this paper.

Next, we will evaluate the performance of our algorithm in terms of the following metrics:

\emph{1) Convergence Performance:} The convergence speed in DRL algorithms holds paramount significance. Algorithms exhibiting rapid convergence display enhanced robustness, reduced susceptibility to noise and instability, and improved generalization capabilities.

\emph{2) Average Algorithm Execution Time:} The algorithm’s execution time denotes the mean interval between inputting system state parameters and producing decisions. This metric offers an intuitive measure of the algorithm’s computational intricacy. A reduced average execution time implies expedited decision-making and response generation by the controller, thereby enhancing the real-time performance of the system.

\emph{3) Average Task Processing Delay:} Here, the average delay of task processing refers to the average normalized delay of all tasks in the same time frame. This metric corresponds to system-level performance, and lower values correspond to better QoE and system-level performance.

\emph{4) Fairness Index:} The fairness index is an important metric used to measure fairness among users. As described in reference [51], we adopt the Jain fairness index, which is defined as
\begin{equation}
\label{eq15}
F = \frac{{{{(\sum\limits_{n \in {\cal N}} {L_n^t} )}^2}}}{{N\mathop {\sum\limits_{n \in {\cal N}} {{{(L_n^t)}^2}} }}},
\end{equation}
where $\mathcal{N}$ represents the set of TDs. Jain’s fairness index $F$ ranges from $1/N$ to $1$, with the maximum value achieved when all TDs have equal normalized task processing delay. A higher value of $F$ indicates a higher level of fairness in the scheme.

\begin{figure*}[t]
\centering
\subfloat[]{\includegraphics[width=0.33\textwidth]{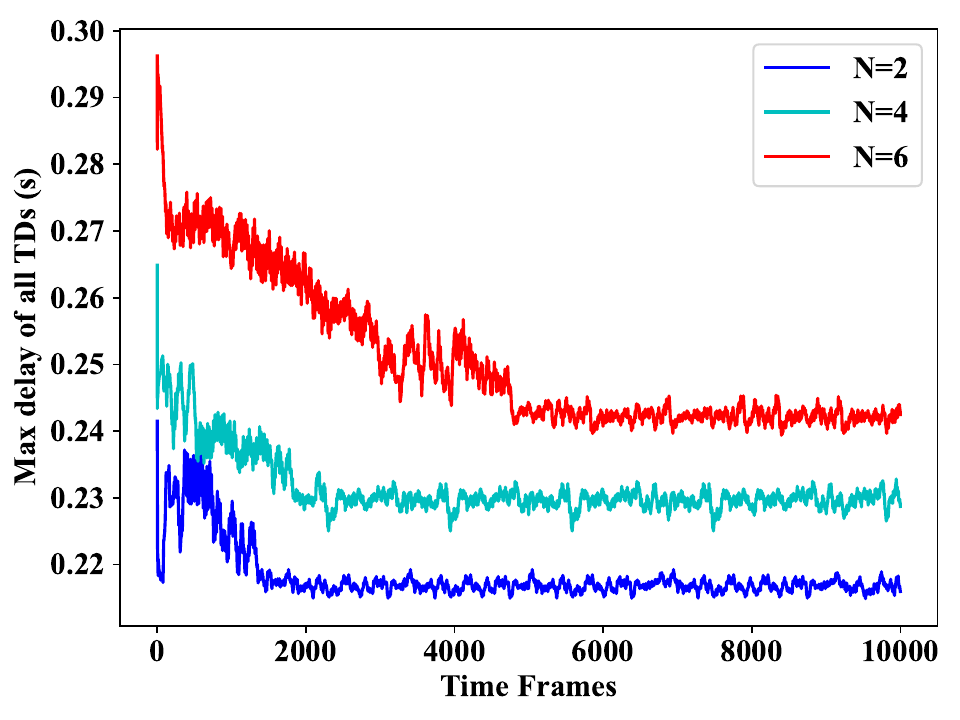}%
\label{fig_first_case}}
\subfloat[]{\includegraphics[width=0.33\textwidth]{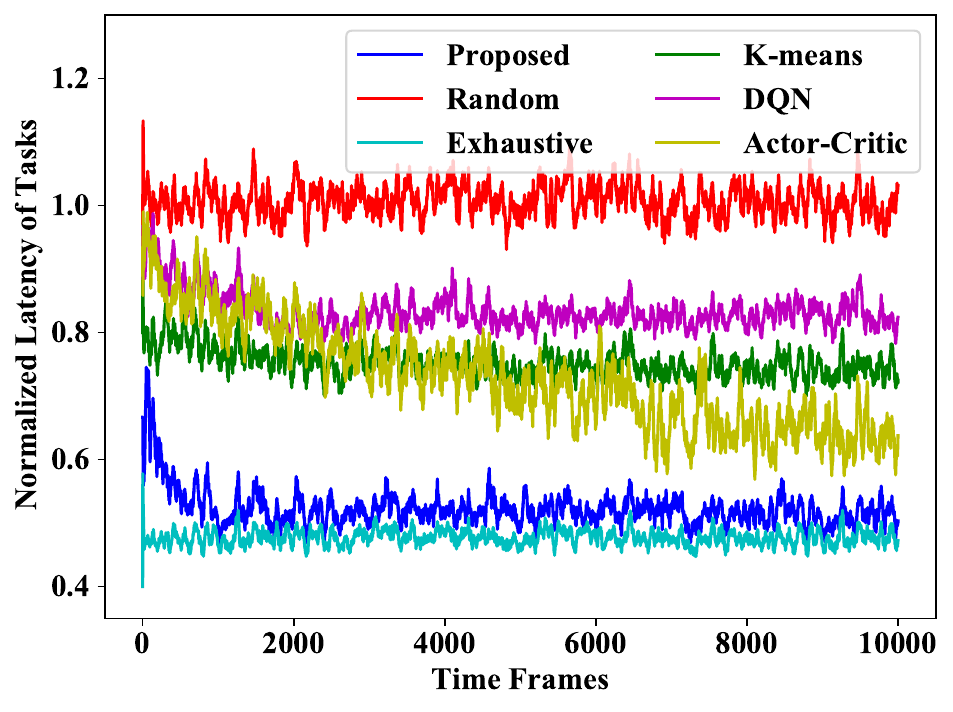}%
\label{fig_second_case}}
\subfloat[]{\includegraphics[width=0.33\textwidth]{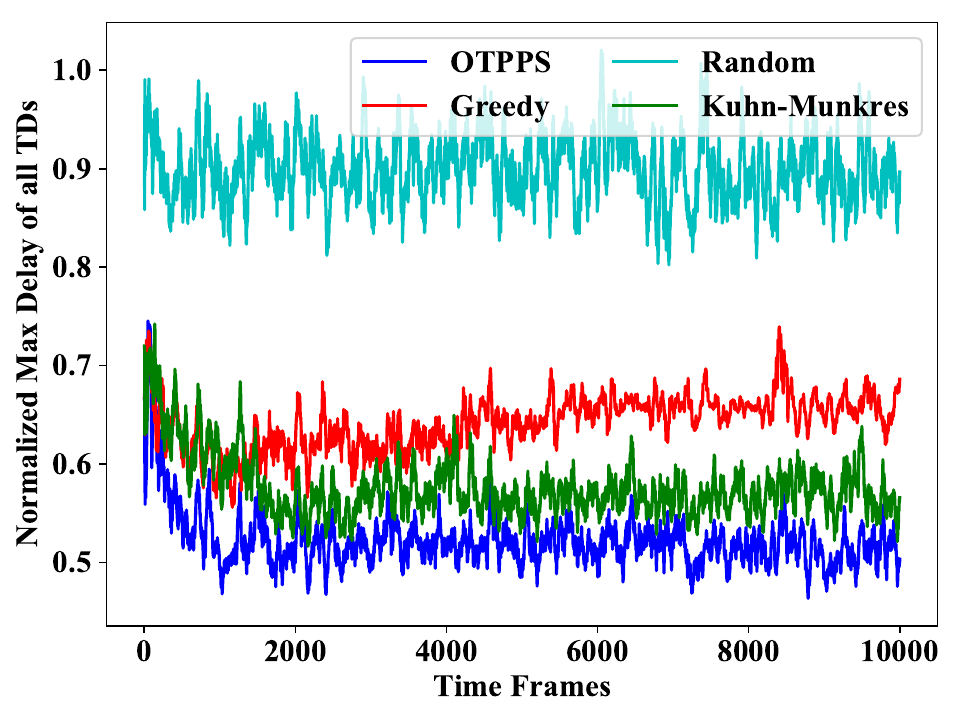}%
\label{fig_second_case}}
\caption{Convergence performance of different algorithms. (a) Comparison of convergence performance of OTPPS algorithm with different numbers of tasks. (b) Comparison of convergence performance of different task partitioning algorithms. (c) Comparison of the convergence performance of different task scheduling algorithms.}
\label{fig10}
\end{figure*}

\subsection{Simulation Results}\label{V-B}

\emph{1) Convergence Performance:} Fig. \ref{fig10} illustrates the convergence performance of the different algorithms. As depicted in Fig. \ref{fig10}(a), the proposed algorithm achieves rapid convergence in terms of the maximum delay of all tasks. The fluctuation of the learning curves over time is attributed to various random factors present in the generated data, including fast-fading channels and time-varying resource provisioning. For $N = 2$, $N = 4$, and $N = 6$, the convergence is achieved in approximately 1600, 2200, and 4600 frames, respectively. Consequently, it can be predicted that the number of iterations required for convergence will gradually increase with the number of TDs. Additionally, it is evident from the figure that the proposed algorithm exhibits a smaller maximum delay when fewer tasks need to be processed. This outcome is attributed to the availability of more idle computing resources for each task partition.

Fig. \ref{fig10}(b) and (c) depict the convergence process of the different algorithms.  In Fig. \ref{fig10}(b), the ratio of the maximum delay achieved with different task partitioning strategies to the maximum delay achieved with a fixed partitioning action is shown. The OTPPS curve demonstrates near-optimal performance compared to the Exhaustive curve, while surpassing the K-means curve in terms of convergence and maximum delay. The proposed OTPPS algorithm reduces the maximum delay by approximately 50$\%$ compared to the MINGRA scheme. In addition, compared to other DRL algorithms, OTPPS shows better convergence performance and the ability to discover and exploit optimal actions. As analyzed in Sections \ref{II-F}, there are a large number of optional task partitioning actions. However, only a few actions are truly superior, which DQN struggles to identify and leverage. Thus, the DQN curve converges to a locally optimal value. Furthermore, a large number of action choices make it very difficult to explore the rewards of each action, resulting in a prolonged learning process. As Section \ref{II-G} describes, Actor-Critic faces convergence challenges confirmed in simulations. Fig. \ref{fig10}(c) illustrates the ratio of the maximum delay achieved with different task scheduling algorithms to the maximum delay achieved with the Local scheme. Among the various task scheduling algorithms, the OTPPS algorithm exhibits superior performance in minimizing the maximum task processing delay. The Greedy, Kuhn-Munkres, and OTPPS curves appear similar because the DRL continuously optimizes the task partitioning scheme, enabling subsequent parallel scheduling operations to achieve a smaller maximum task processing delay.
\begin{figure*}[b]
\centering
\subfloat[]{\includegraphics[width=0.33\textwidth]{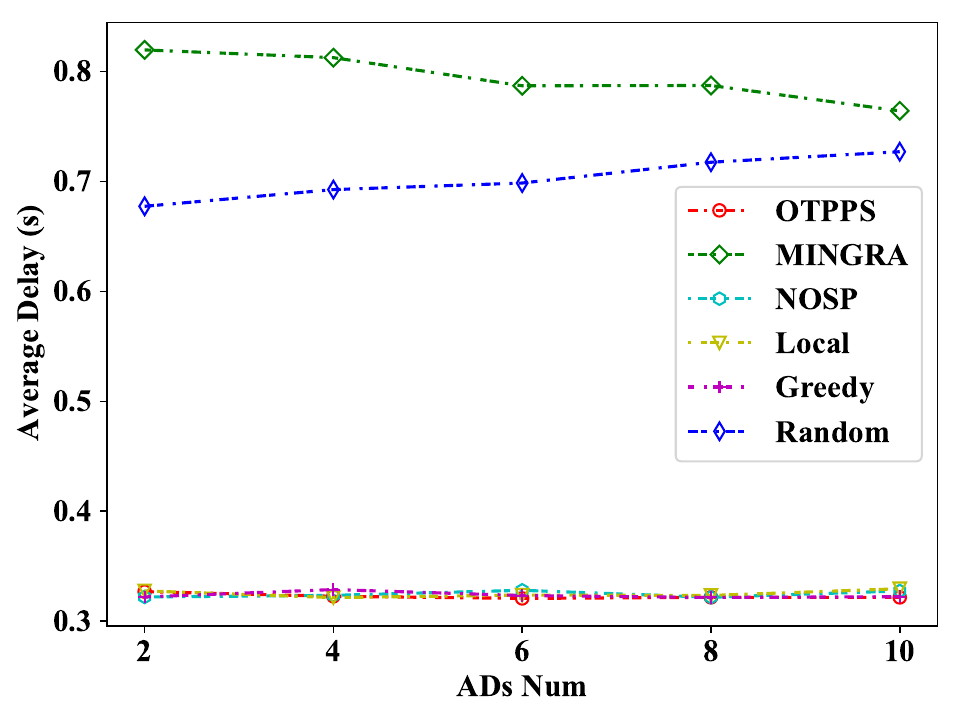}%
\label{fig_first_case}}
\subfloat[]{\includegraphics[width=0.33\textwidth]{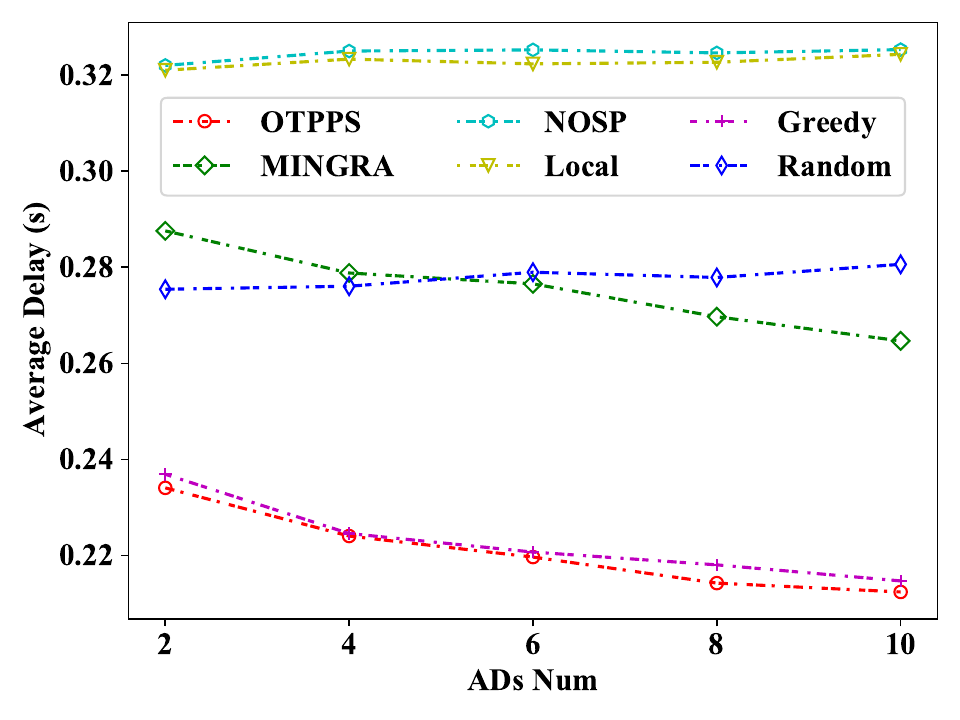}%
\label{fig_second_case}}
\subfloat[]{\includegraphics[width=0.33\textwidth]{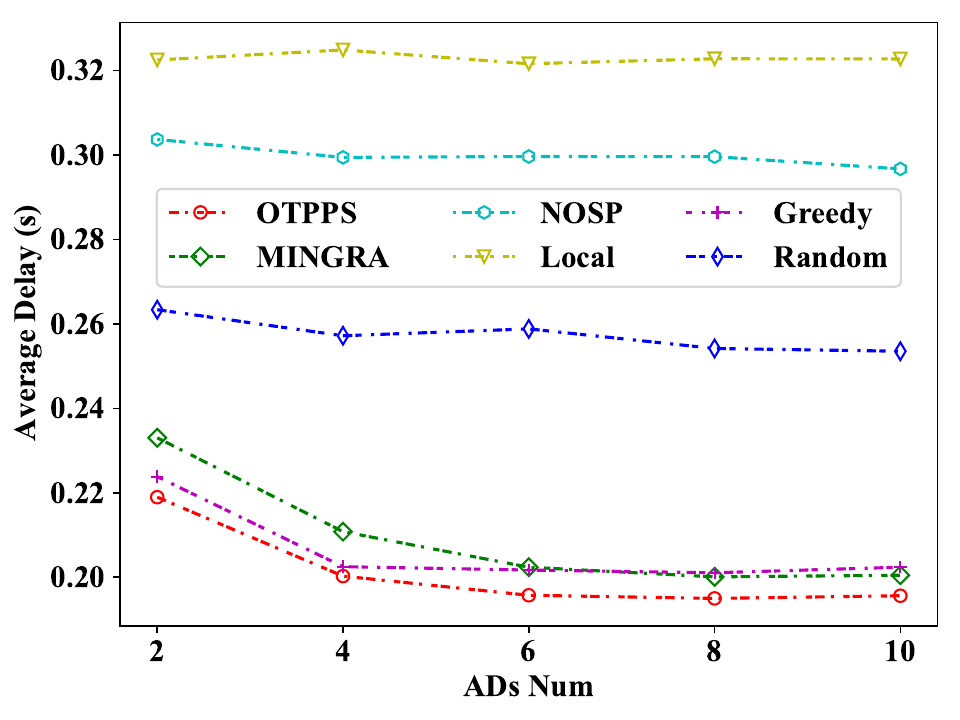}%

\label{fig_second_case}}

\caption{Average task processing delay performance of different algorithms with different numbers of auxiliary IoT devices. (a) $f_m^t\in[0.2,0.4]$GHz. (b) $f_m^t\in[0.8,1.0]$GHz. (c) $f_m^t\in[1,4,1.6]$GHz.}
\label{fig11}
\end{figure*}

\begin{table}[h]
\centering
\caption{Comparisons of CPU Execution Latency}
\label{tab4}
\resizebox{\columnwidth}{!}{%
\begin{tabular}{c|ccccc}
\hline
Number of Tasks & 2       & 3       & 4       & 5       & 6       \\ \hline
OTPPS           & 0.00497s & 0.00721s & 0.01298s & 0.01939s & 0.03495s \\
K-Means         & 0.04864s & 0.07284s & 0.09491s & 0.10652s & 0.12584s \\
MINGRA          & 0.00109s & 0.00113s & 0.00126s & 0.00148s & 0.00159s \\
Exhaustive      & 0.29105s & 2.32838s &         &         &         \\ \hline
\end{tabular}%
}
\end{table}

\emph{2) Algorithm Average Running Time Performance:} In this experiment, we evaluate the average running time of different task partitioning algorithms. It is important to note that the execution latency for various algorithms listed in Table \ref{tab4} is averaged over 10,000 independent wireless channel realizations, including policy generation and DNN training. As shown in Table \ref{tab4}, the OTPPS algorithm demonstrates a running time comparable to that of the MINGRA scheme. Specifically, it generates task partitioning actions and parallel scheduling policies in less than 0.1 seconds when $N=6$, while K-Means exhibits four times longer running time. Although the Exhaustive scheme can identify the optimal task partitioning action, its running time exceeds the range of channel coherence time when $N > 3$ \cite{49,50}, rendering it impractical in time-varying channel environments. Therefore, OTPPS enables real-time task partitioning and parallel scheduling for device-assisted MENs in fading environments.

Table \ref{tab5} presents the computational complexities corresponding to the algorithms listed in Table \ref{tab4},  where $I_{max\_iter}$ denotes the maximum number of iterations set in the K-means algorithm. The complexity analysis for OTPPS and Exhaustive algorithms can be found in Sections \ref{IV-C} and \ref{II-F}, respectively. Due to space constraints, we have not included the complexity analysis for the K-Means and MINGRA algorithms. It’s important to note that even though OTPPS and MINGRA share the same computational complexity, the runtime of OTPPS exceeds that of MINGRA. This discrepancy arises because computational complexity describes the relationship between algorithm execution time and data size growth and does not directly reflect the algorithm’s actual runtime.

\begin{figure*}[t]
\centering
\subfloat[]{\includegraphics[width=0.33\textwidth]{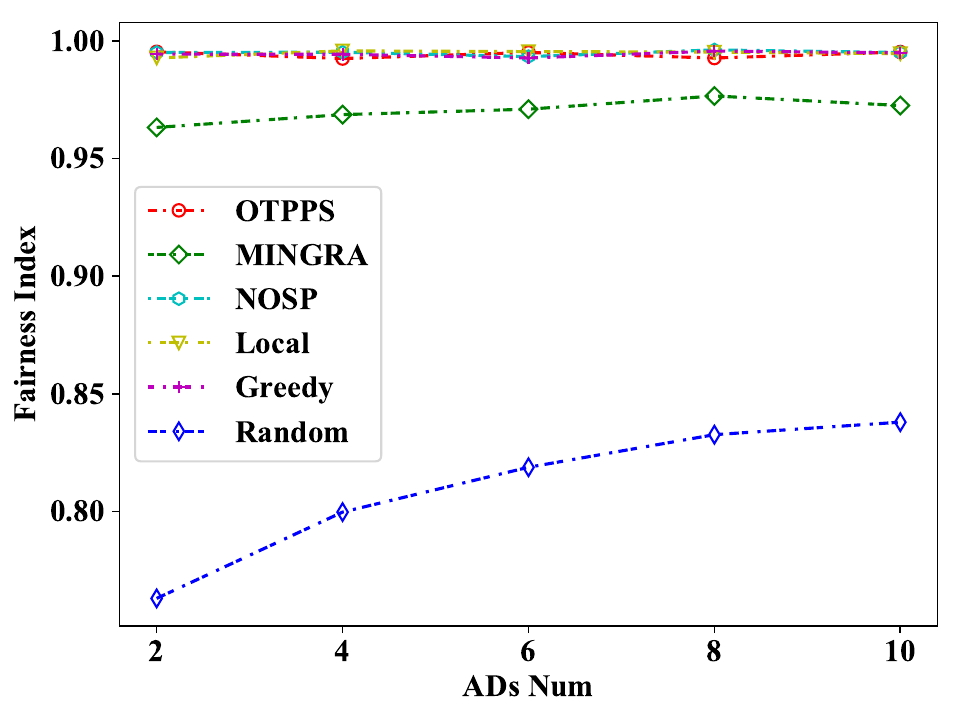}%
\label{fig_first_case}}
\subfloat[]{\includegraphics[width=0.33\textwidth]{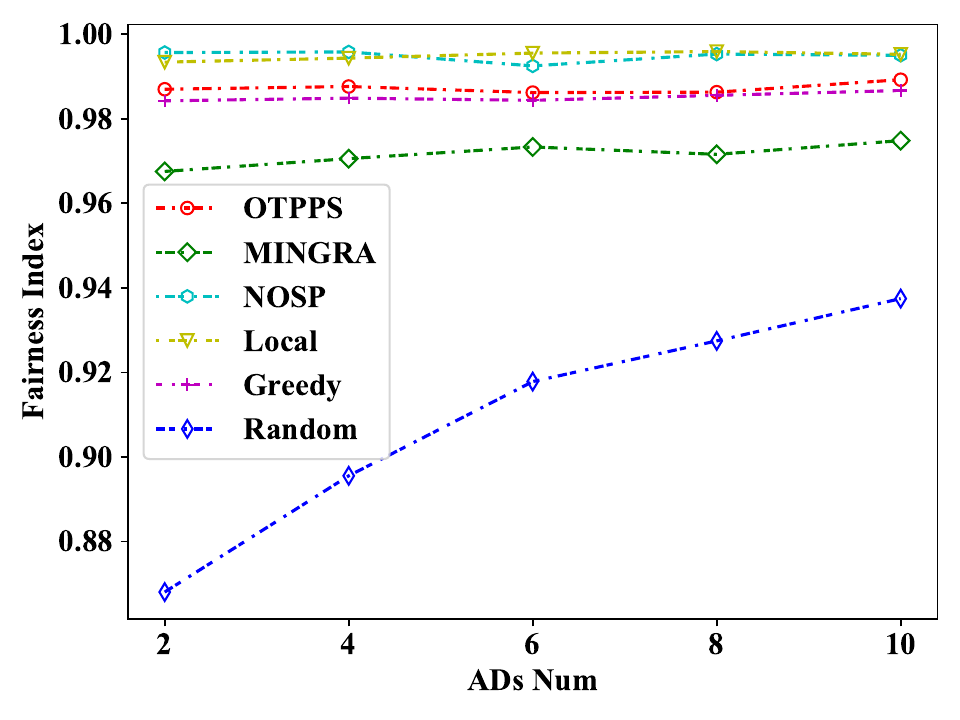}%
\label{fig_second_case}}
\subfloat[]{\includegraphics[width=0.33\textwidth]{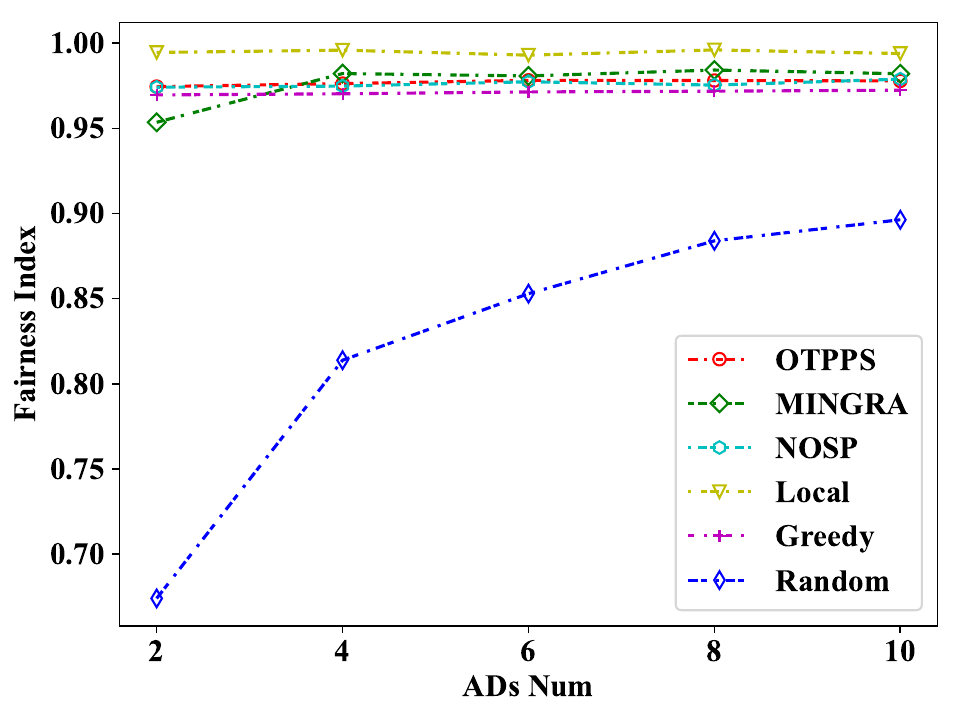}%
\label{fig_second_case}}

\caption{Fairness indices for different algorithms with different numbers of auxiliary devices. (a) $f_m^t\in[0.2,0.4]$GHz. (b) $f_m^t\in[0.8,1.0]$GHz. (c) $f_m^t\in[1,4,1.6]$GHz.}
\label{fig12}
\end{figure*}

\begin{table}[]
\centering
\caption{Computational Complexity Analysis}
\label{tab5}
\resizebox{\columnwidth}{!}{%
\begin{tabular}{@{}ll@{}}
\toprule
Algorithm\qquad\qquad\qquad\qquad  & Computational
complexity                                                                               \\ \midrule
OTPPS\qquad\qquad\qquad\qquad      & $\mathcal{O}(ZM^2N^2)$                                                                         \\
K-Means\qquad\qquad\qquad\qquad    & $\mathcal{O}(N^2I_{max\_iter}+ZN^2M^2)$                                                        \\
MINGRA\qquad\qquad\qquad\qquad     & $\mathcal{O}(ZM^2N^2)$                                                                         \\
Exhaustive\qquad\qquad\qquad\qquad & $\mathcal{O}(\prod_{n\in\mathcal{N}}\sum_{i=1}^{N_{n,res}}\left(i\right)^{N_{n,res}}+ZM^2N^2)$ \\ \bottomrule
\end{tabular}%
}
\end{table}

\emph{3) Average Task Processing Delay Performance:} This experiment primarily investigates the impact of increasing the number of auxiliary IoT devices on the average task processing delay performance of various algorithms, considering different average values of idle resources for the auxiliary IoT devices. In this simulation, $N=2$ and ES $0$ allocates 1GHz of computing resources to each TD. The idle computational resources of the auxiliary IoT devices in Fig. \ref{fig11}(a) are uniformly distributed between 0.2 and 0.4 GHz. As the number of auxiliary IoT devices increases, the average delay remains constant for the Local, NOSP, Greedy, and OTPPS schemes, while MINGRA exhibits the worst performance. This result suggests that not splitting tasks is the optimal choice when the idle computing resources of  all the auxiliary IoT devices are significantly smaller than the computing resources allocated by ES $0$. The gradual degradation in the performance of the Random scheme with an increasing number of auxiliary IoT devices can be attributed to the growing probability of scheduling task partitions for execution on the auxiliary IoT devices.

The idle computing resources of the auxiliary IoT devices in Fig. \ref{fig11}(b) are uniformly distributed between 0.8 and 1 GHz. It can be observed that the Local and NOSP curves overlap, while the OTPPS algorithm demonstrates optimal performance. This result highlights the advantages of task partitioning and indicates that partitioning tasks into smaller granularities does not necessarily yield better performance. 

The idle computing resources of the auxiliary IoT devices in Fig. \ref{fig11}(c) are uniformly distributed between 1.4 and 1.6 GHz. It can be observed that the OTPPS and MINGRA curves overlap when the number of ADs is sufficient. The NOSP scheme outperforms the Local scheme, as scheduling task partitions to ADs for execution proves more effective than performing them on ES $0$ when ADs have more computing resources than that of the ES $0$ allocation. 

The aforementioned scenarios demonstrate that OTPPS is capable of achieving optimal performance under various environmental conditions.

\emph{4) Fairness Performance:} The environment setup in this experiment is consistent with the previous experiments. From Fig. \ref{fig12}, it can be observed that the OTPPS algorithm sacrifices some fairness metrics in order to achieve lower delay, while still maintaining a high level of fairness index. OTPPS’s fairness performance is better than that of Greedy, and it improves the fairness index by about 30$\%$ compared to the Random scheme. Additionally, although OTPPS has a slightly lower fairness index compared to Local, NOSP, and MINGRA in some scenarios (Fig. \ref{fig12}(b) and Fig. \ref{fig12}(c)), it significantly reduces the average task processing delay compared to these algorithms (Fig. \ref{fig11}). The simulations demonstrate OTPPS reduces average task delay while maintaining user fairness.

\section{Conclusion}
In this paper, we propose OTPPS, a novel framework for computing offloading in device-assisted mobile edge networks. OTPPS aims to reduce the average task processing delay while ensuring fairness among users through joint optimization of task partitioning and parallel scheduling. The algorithm leverages past task partitioning experiences to enhance its partitioning action generated by a DNN through reinforcement learning. The STQ algorithm is developed to facilitate fast convergence of the optimization process. Additionally, the FDMTS algorithm is devised to address the parallel scheduling problem with a given task partitioning  action. Evaluation results demonstrate that OTPPS achieves nearly optimal average delay performance while significantly reducing the runtime by more than an order of magnitude. Furthermore, it consistently ensures high fairness indices across various system states. While the device-assisted mobile edge network alleviates the burden on the edge computing server and enhances system performance, it introduces additional communication overhead. Hence, future work will explore the incorporation of optimal communication resource allocation within the objective function.

\bibliographystyle{IEEEtran}
\bibliography{IEEEabrv, references}

\begin{IEEEbiography}[{\includegraphics [width=1in,height=1.25in] {./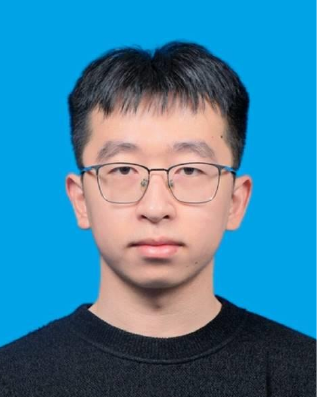}}] {Yang Li}
  received the B.S. degree in communication engineering from Beijing University of Posts and Telecommunications (BUPT), Beijing, China, in 2022. He is currently pursuing the Ph.D. degree with the Key Laboratory of Universal Wireless Communication, School of Information and Communication Engineering, BUPT. His  research interests include device-assisted mobile edge networks, computing offloading and resource allocation.
\end{IEEEbiography}

\begin{IEEEbiography}[{\includegraphics [width=1in,height=1.25in] {./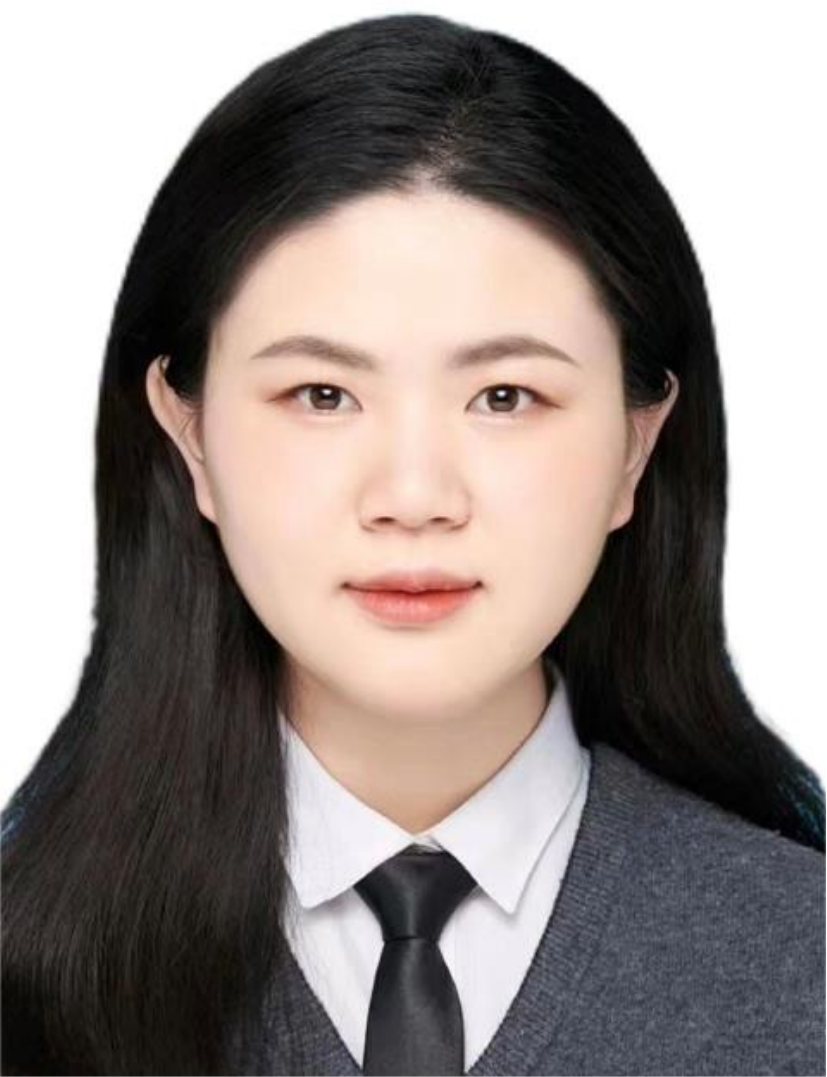}}] {Xinlei Ge}   
  received the B.S. degree in telecommunications engineering with management from Beijing University of Posts and Telecommunications (BUPT), Beijing, China, in 2023. She is currently pursuing the M.S. degree with the Key Laboratory of Universal Wireless Communication, School of Information and Communication Engineering, BUPT. Her research interests include computing power network, federated learning and service migration.
\end{IEEEbiography}

\begin{IEEEbiography}[{\includegraphics [width=1in,height=1.35in] {./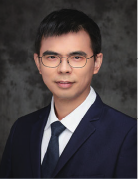}}] {Bo Lei}
  is now a director of future network research center of China telecom research institute. Bo Lei received his Master’s degree in Telecommunication Engineering from Beijing University of Posts and Telecommunications, Beijing, P. R. China, in 2006. His currently research interests include future network architecture, new network technology, computing power network and 5G application verification. Bo Lei now leads the future network research center focusing on future network. He is the first author of two technical books and has published more than 30 papers in top journals and international conferences, and filed more than 30 patents.
\end{IEEEbiography}

\begin{IEEEbiography}[{\includegraphics [width=1in,height=1.25in] {./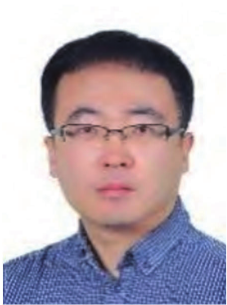}}] {Xing Zhang} 
  (M’10-SM’14) is currently a full professor with the School of Information and Communications Engineering, Beijing University of Posts and Telecommunications, China. His research interests are mainly in 5G/6G networks, satellite communications, edge intelligence, and Internet of Things. He has authored or coauthored five technical books and over 300 papers in top journals and international conferences and holds over 80 patents. He has received six Best Paper Awards in international conferences. He is a Senior Member of the IEEE and a member of CCF. He has served as a General Co-Chair of the third IEEE International Conference on Smart Data (SmartData-2017), as a TPC Co-Chair/TPC Member for a number of major international conferences.
\end{IEEEbiography}

\begin{IEEEbiography}[{\includegraphics [width=1in,height=1.35in] {./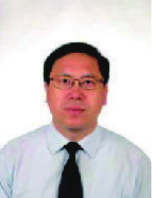}}] {Wenbo Wang}
  received the B.S., M.S., and Ph.D. degrees from BUPT in 1986, 1989, and 1992, respectively. He is currently a Professor with the School of Information and Communications Engineering, and the Executive Vice Dean of the Graduate School, Beijing University of Posts and Telecommunications. He is currently the Assistant Director with the Key Laboratory of Universal Wireless Communication, Ministry of Education. He has authored over 200 journal and international conference papers, and six books. His current research interests include radio transmission technology, wireless network theory, and software radio technology.
\end{IEEEbiography}

\end{document}